\definecolor{darkgreen}{rgb}{0,0.6,0}
\definecolor{purple}{rgb}{1,0,1}
\definecolor{darkgreen}{rgb}{0,0.6,0}
\definecolor{purple}{rgb}{1,0,1}
\newcommand{\kibitz}[2]{\ifnum\Comments=1{\color{#1}{#2}}\fi}
\title{On the Exponential Growth of Geometric Shapes} 
\titlerunning{On the Exponential Growth of Geometric Shapes} 
\author{Nada Almalki}{Department of Computer Science, University of Liverpool, UK}{n.almalki@liverpool.ac.uk}{}{}
\author{Siddharth Gupta}{Department of Computer Science \& Information Systems, BITS Pilani Goa Campus, India}{siddharthg@goa.bits-pilani.ac.in}{https://orcid.org/0000-0003-4671-9822}{}
\author{Othon Michail}{Department of Computer Science, University of Liverpool, UK}{othon.michail@liverpool.ac.uk}{https://orcid.org/0000-0002-6234-3960}{}
\authorrunning{N. Almalki, S. Gupta, and O. Michail}
\keywords{Centralized algorithm, Growth process, Collision, Programmable matter} 
\begin{document}

\maketitle
\begin{abstract}
In this paper, we explore how geometric structures can be grown exponentially fast. The studied processes start from an initial shape and apply a sequence of centralized growth operations to grow other shapes. We focus on the case where the initial shape is just a single node. A technical challenge in growing shapes that fast is the need to avoid collisions caused when the shape breaks, stretches, or self-intersects. We identify a parameter $k$, representing the number of turning points within specific parts of a shape. We prove that, if edges can only be formed when generating new nodes and cannot be deleted, trees having $O(k)$ turning points on every root-to-leaf path can be grown in $O(k\log n)$ time steps and spirals with $O(\log n)$ turning points can be grown in $O(\log n)$ time steps, $n$ being the size of the final shape. For this case, we also show that the maximum number of turning points in a root-to-leaf path of a tree is a lower bound on the number of time steps to grow the tree and that there exists a class of paths such that any path in the class with $\Omega(k)$ turning points requires $\Omega(k\log k)$ time steps to be grown. If nodes can additionally be connected as soon as they become adjacent, we prove that if a shape $S$ has a spanning tree with $O(k)$ turning points on every root-to-leaf path, then the adjacency closure of $S$ can be grown in $O(k \log n)$ time steps. In the strongest model that we study, where edges can be deleted and neighbors can be handed over to newly generated nodes, we obtain a universal algorithm: for any shape $S$ it gives a process that grows $S$ from a single node exponentially fast.

\end{abstract}

\section{Introduction}\label{sec:intro}

\subsection{Scenario: Planet Exploration by Self-Replicating Robotic Modules}\label{subsec:scenario}

Suppose the modules of a robotic system are designed in such a way that, upon receiving a special replication signal, they can generate copies of themselves in the local space surrounding them, thus increasing the size and changing the structure of the robot. We assume that a centralized algorithm controls which modules will replicate each time and in which direction, doing so by generating appropriate replication signals and transmitting them to the relevant modules. Through this mechanism, the algorithm can control the growth dynamics of the robotic structure, resembling the way the genome controls embryonic development in multicellular organisms.

Now imagine that a planet exploration mission requires the deployment of $r$ worm-like robots, each robot consisting of $n$ modules connected in a chain. Both $r$ and $n$ are assumed to be large positive integers. Under these assumptions, the following would be a cost- and time-efficient strategy to transport and deploy the robots. Only a few individual modules are transported, which, once on the planet, self-replicate successively until $r$ individual modules are generated. The following algorithm is then applied in parallel on the $r$ modules, each assumed to be a trivial initial chain of unit length. Operating in discrete time steps, the algorithm keeps sending a ``(right, connected)'' replication signal to every module of the chain. Upon receiving this signal, every module $u$ which is not the rightmost module of the chain generates a new module $u'$ between itself and its right neighbor $v$, $u'$ becomes connected to both $u$ and $v$, and the previous connection between $u$ and $v$ is dropped. The rightmost module $u$ of the chain generates a new module $u'$ to its right and connects to it. As the length of every chain doubles in each time step, after $\log n$ time steps the algorithm will have grown an $n$-module robot out of each of the $r$ individual modules.

In this work, we explore two interrelated questions that generalize this scenario: \emph{``What are the structural properties associated with exponential growth of geometric shapes?''} and \emph{``How can some of these properties be exploited and others avoided in order to design algorithms that can grow desired shapes exponentially fast?''}

Consider, for example, the question asking if a robot, whose $n$ modules ---called \emph{nodes} hereafter--- are arranged in a tree shape $T$, can be grown from a single node $u_0$ exponentially fast. One possible approach is to use breadth-first search on the line segments of $T$. Starting from $u_0$, the algorithm proceeds in phases of increasing line-segment distance, measured by the number of maximal line segments from $u_0$. In each phase $i$, the algorithm grows in parallel all maximal line segments at a line-segment distance $i$ from $u_0$. Each line segment can be grown exponentially fast by the chain algorithm of the planet exploration scenario. If $T$ is sparse (informally meaning its branches are sufficiently far from each other) and has $O(k)$ line segments on every root-to-leaf path, then this algorithm would grow $T$ in $O(k\log n)$ time steps, which gives a logarithmic number of time steps if $k$ is a constant but can be as slow as linear for $k=\Omega(n/\log n)$. A less obvious observation is that, when $T$ is not sparse, we need to specify how the efficiency or even the feasibility of growing a branch depends on the presence of nearby branches. 

These observations prompt further questions that we explore in the present paper, such as: Is there an algorithmic approach that can grow any tree or any general shape of $n$ nodes in $O(\log n)$ time steps? What other classes of shapes can be identified for which (poly)logarithmic upper bounds or super-logarithmic lower bounds can be proved? What are the relevant parameters of the geometry of a shape that affect the complexity of growth? What are the right modeling assumptions to represent, for example, how parallel local growth affects global growth, the different types of collisions that should be avoided, or properties related to the connectivity of the shape?        

\subsection{Motivation and Related Work}\label{subsec:related-work}

During the early stages of an organism's development, cells undergo exponential growth, beginning with a single cell and successively doubling their total number. Mathematical models of human embryonic growth have been proposed~\cite{luecke1999mathematical}. 
Though our model takes inspiration from natural growth processes, it also shares features with existing theoretical models of computation and robotics. 

The abstract Tile Assembly Model~\cite{winfree1996computational,winfree1998algorithmic} is a mathematical model of self-assembly. 
The process starts from an initial 
assembly of tiles  
and the system grows, much like crystals in nature, by passive attachment of new tiles, which bind on existing tiles through glues on their edges. \emph{Passive} means that interactions between the tiles are controlled by the environment. In practice,
tiles are typically DNA molecules; see~\cite{doty2012theory,patitz2014introduction} for surveys of work on the theory of algorithmic self-assembly. Though growth is a defining property of both our model and self-assembly models, in the majority of self-assembly models growth is through passive attachment on the external layer of the formed structure, including any internal cavities. In contrast, algorithms in our model can actively control the structure's growth and can do so without an \emph{a priori} limitation on where to apply the growth operations. As a result of this, growth in passive self-assembly is a relatively slow process both in theory and in practice, whereas the algorithms we develop use a number of time steps which is typically sub-linear and often (poly)logarithmic in the size of the final structure.

An example of a self-assembly model incorporating active molecular dynamics is the \emph{Nubot} model~\cite{woods2013active}. The authors explore how to grow connected two-dimensional geometric shapes and patterns in time (poly)logarithmic in their size, an objective that is broadly similar to ours.  
For example, they show how to grow a chain (like the one in the planet exploration scenario) or a square of monomers in time and number of monomer states which is logarithmic in the total number of monomers in the final structure. They use these constructions as a basis for constructing arbitrary shapes exponentially fast, combining further growth and other forms of reconfiguration. A difference between our model and \cite{woods2013active} is that our processes are only allowed to update instances through growth.
As is also the case in \cite{woods2013active}, most of our algorithms use the fast process of growing a chain as a sub-routine.

\emph{Programmable matter} refers to systems composed of small and uniform robotic modules. The collection of modules can dynamically change its physical properties, such as its shape or density. Recently, there has been growing interest in studying the algorithmic foundations of such systems, focusing on their ability to alter their shape through local reconfiguration~\cite{akitaya2021universal, almethen2020pushing, derakhshandeh2016universal, michail2019transformation}. 
Control can be either centralized or decentralized and the focus is typically on the feasibility of a given reconfiguration task, as most of these models share a natural quadratic lower bound on worst-case moves based on reconfiguration distance (translating to a linear lower bound if parallel moves are allowed) \cite{michail2019transformation}. An exception is a recent line of work on models for fast parallel reconfiguration \cite{feldmann2022coordinating,padalkin2023shape,daymude2023canonical}, extending the Amoebot model of Derakhshandeh \emph{et al.}  \cite{derakhshandeh2014amoebot}.
The growth processes studied in the present paper could serve as a way to deploy programmable matter fast, either in its exact initial configuration or in a rough version of it that can be then refined through other types of operations.

An assumption of our model is that individual operations have \emph{linear strength}, meaning that they have enough power to move any part of the structure. This has been a common simplifying assumption in the relevant literature and can sometimes be dropped, e.g., when more than one operation can be applied in parallel to collaboratively provide the required ``force''. Examples of other linear-strength models are \cite{aloupis2008reconfiguration,woods2013active,almethen2020pushing}.

Other studies close to our work are \cite{almalki2024geometric}, which considered simpler forms of the growth models and operations studied in the present paper, \cite{gupta2023collision}, which explored the complexity of deciding whether collisions will occur or can be avoided for given set of growth and contraction operations, and \cite{mertzios2022complexity}, where the authors studied related growth problems on abstract (not necessarily geometric) graphs. There are important differences between the models of \cite{almalki2024geometric} and the ones studied in the present paper. Growth operations in \cite{almalki2024geometric} are conditioned to affect specific parts of the shape (e.g., whole columns) and are defined in a way which ensures in advance that no collision can happen. In contrast, the growth operations in the present paper can be applied to any parts of the shape and it is the responsibility of the algorithm designer to ensure that no structural violation occurs. Our model also has some relevance to von Neumann's concept of self-replicating machines.

\subsection{Contribution}\label{subsec:contribution}

Our aim is broadly to understand how geometric structures can be grown exponentially fast. Though this question is also relevant to distributed algorithms,
our present study is \emph{centralized}; both cases are largely unexplored and we naturally aim to understand the centralized case first. The distributed case remains a direction for future research. Nevertheless, our centralized lower bounds also apply to distributed algorithms and it might be possible to translate our centralized algorithms into ---potentially less efficient--- distributed algorithms.

In this paper, we represent geometric structures as shapes drawn on a two-dimensional  
square grid. Though three-dimensional 
space and other coordinate systems (including continuous) can be more suitable for some applications, the considered model is theoretically simpler and the techniques developed for it are likely to generalize. A shape is a geometric graph that is formed by nodes occupying distinct points on the grid and edges between pairs of adjacent nodes (i.e., nodes at a unit orthogonal distance from each other on the grid). We naturally restrict attention to connected shapes.

Intuitively, a \emph{growth process} is a sequence of sets of growth operations that are applied starting from an initial shape $S_0$ ---often a single node--- to grow a final shape $S$ of $n$ nodes. We restrict attention to discrete time, which we measure in \emph{time steps}. In all problems we study, we want to determine a bound $\tau$ such that for all shapes $S_0$ and $S$ from given initial and final classes of shapes, respectively, there is a growth process $\sigma$ that grows $S$ from $S_0$ in $\tau$ time steps. For an upper bound to establish exponential growth, $\tau$ should be at most (poly)logarithmic in the size of $S$. Non-trivial lower bounds should be super-logarithmic in the size of $S$.

When the initial shape is a single node, it is probably plausible to assume that there might be a straightforward solution to this problem. One could try to grow a small compressed version $T'$ of a spanning tree $T$ of shape $S$, decompress $T'$ through parallel growth to get $T$ exponentially fast, and add any missing edges in order to get $S$. However, under reasonable assumptions, this generic approach fails to be exponentially fast in the worst case. One reason for this is that there might be no meaningful compression of $S$. The second reason is that even when there is a satisfactory compression, fast decompression might not be possible if self-intersection must be avoided. We call all structural violations of the shape \emph{collisions} and are only interested in growth processes that are free from collisions. 

In Section~\ref{sec:models-preliminaries}, we define a general model of growth and distinguish some of its variants. One distinction is drawn between those processes that can only form edges when generating new nodes (\emph{connectivity graph}) and those that can additionally connect nodes as soon as they become adjacent (\emph{adjacency graph}). Another is based on whether structural violations of cycles are to be avoided by preserving the affected cycles (\emph{cycle-preserving} processes) or by breaking them (\emph{cycle-breaking} processes). 
In Section~\ref{sec:connectivity-graph-model}, we study the connectivity graph model. 
We identify a parameter $k$, representing the number of turning points within specific parts of a shape, and show 
that, when starting from a single node, trees having $O(k)$ turning points on every root-to-leaf path can be grown in $O(k \log n)$ time steps through BFS on line segments and spirals with $O(\log n)$ turning points can be grown in $O(\log n)$ time steps through a pipelined version of BFS. We also establish two lower bounds for trees and paths. For trees, we show that  
the maximum number of turning points in a root-to-leaf path of the tree is a lower bound on the number of time steps to grow the tree from a single node. For paths, we show that there exists a class of paths such that any path in the class with $\Omega(k)$ turning points requires $\Omega(k\log k)$ time steps to be grown from a single node. This shows that pipelined BFS, or any other approach, cannot give a logarithmic upper bound for all paths with a logarithmic number of turning points, because there are paths in the class with $\Theta(\log n)$ turning points. Though an $\Omega(n)$ lower bound for staircase shapes with $\Theta(n)$ steps from \cite{almalki2024geometric} applies to the model studied in Section~\ref{sec:connectivity-graph-model}, it would not exclude (poly)logarithmic upper bounds for paths with (poly)logarithmic many turning points. This is because that lower bound does not generalize to staircases with (poly)logarithmically many turning points: in fact, a staircase with a logarithmic number of turning points can be grown in a logarithmic number of time steps by first growing the turning points and then bringing the staircase to its final length by growing its line segments in parallel.

In Section~\ref{sec:adj-graph-model}, we discuss the cycle-preserving and cycle-breaking types of processes within the adjacency graph model, in which every pair of adjacent nodes is also connected in the shape. Note that this distinction is not meaningful in the connectivity graph model for processes starting from a single node, because the shapes grown by these processes are acyclic. 
For cycle-preserving processes, we prove that if a shape $S$ has a spanning tree with $O(k)$ turning points on every root-to-leaf path, then the adjacency closure of $S$ can be grown from a single node in $O(k \log n)$ time steps. 
For cycle-breaking processes with the additional assumption that neighbors can be handed over to newly generated nodes (\emph{neighbor handover}), we show that there exists a universal algorithm: for any shape $S$ it gives a process that grows $S$ from a single node exponentially fast. 
Even though the combination of cycle-breaking and neighbor handover is a rather strong one, it shows that there is at least one reasonable model variant which can give a universal algorithm for exponential growth. To what extent neighbor handover can help increase the class of shapes that can be grown exponentially fast by cycle-preserving processes is an open question. In Section~\ref{sec:comparative-overview}, we show a comparative overview of the adjacency and connectivity graph models, highlighting the shape classes constructible by each.
In Section~\ref{sec:conclusion}, 
we discuss open problems raised by our model and its variants.

\section{Models and Problem}\label{sec:models-preliminaries}

We consider a two-dimensional square grid, each point of which  
is identified by its $x\geq 0$ and $y\geq 0$ integer coordinates, $x$ indicating the column and $y$ the row. A \emph{shape} is defined as a graph $S=(V, E)$ drawn on the grid. $V$ is a set of $n$ nodes, where each node $u$ occupies
a distinct point $(u_x,u_y)$ of the grid. $E\subseteq \{uv\;|\; u,v\in V \text{ and } u,v \text{ are adjacent}\}$ is a set of edges between pairs of adjacent nodes, where two nodes $u=(u_x, u_y)$ and $v=(v_x, v_y)$ are \emph{adjacent} if $u_x\in\{v_x-1,v_x+1\}$ and $u_y=v_y$ or $u_y\in\{v_y-1,v_y+1\}$ and $u_x=v_x$, that is, their orthogonal distance on the grid is one. 
We illustrate nodes as small circles drawn on the points they occupy; however, our results hold for any geometry of individual nodes that does not trivially make nearby nodes intersect.  
A shape is \emph{connected} if the graph that defines it is a connected graph. Throughout the paper, we restrict attention to connected shapes. We define the \emph{adjacency closure} of a shape $S=(V,E)$ as the shape $AC(S)=(V, E')$, where $E' = E \cup \{uv\;|\; u, v \in V$, $u, v$ are adjacent, and $uv \notin E\}$.

A \emph{growth operation} (also called \emph{doubling} \cite{almalki2024geometric} or \emph{expansion} \cite{gupta2023collision}) applied on a node $u$ generates a new node in one of the points adjacent to $u$ and possibly translates some part of the shape. 
One or more growth operations applied in parallel to nodes of a shape $S$ either cause a \emph{collision} or yield a new shape $S'$. There are two types of collisions: \emph{node collisions} and \emph{cycle collisions}. When describing the outcome of growth operations, we will be assuming that 
there is an \emph{anchor} node 
which is stationary and other nodes move relative to it.
This is without loss of generality under the assumption that the constructed shapes are equivalent up to translations. 
 
Temporarily disregarding collisions, we begin by defining the effect of a single growth operation on a tree shape; we will add collisions to the definition of the general case, of one or more operations applied in parallel. Let $T=(V,E)$ be a tree and $u_0\in V$ its anchor. We set $u_0$ to be the root of $T$. A single growth operation is applied on a node $u\in V$ toward a point $(x,y)$ adjacent to $u$. 
There are two cases: (i) there is no edge between $u$ and $(x,y)$, (ii) $(x,y)$ is occupied by a node $v$ and $uv\in E$. We first define the effect in each of these cases when \emph{neighbor handover} is not allowed. In case~(i), the growth operation generates a node $u'$ at point $(x,y)$ and connects it to $u$. In case~(ii), assume without loss of generality that $u$ is closer to $u_0$ in $T$ than $v$.
Let $T(v)$  denote the subtree of $T$ rooted at node $v$. Then, the operation generates a node $u'$ between $u$ and $v$, connected to both, which translates $T(v)$ by one unit away from $u$ along the axis parallel to~$uv$. After this, $u'$ occupies $(x,y)$ and $uv$ has been replaced by $\{uu',u'v\}$. If neighbor handover is allowed, then any neighbor $w$ of $u$ perpendicular to $uu'$ can be handed over to $u'$. This happens by a unit translation of $T(w)$ or $T(u)$ along the axis parallel to~$uu'$, depending on which of $u,w$, respectively, is closer to $u_0$ in $T$.

Let $Q$ be a set of operations to be applied \emph{in parallel} to a connected shape $S$, each operation on a distinct pair of nodes or a node and an unoccupied point. 
We assume that all operations in such a set 
are applied \emph{concurrently}, have the same \emph{constant execution speed}, and their \emph{duration} is equal to one \emph{time step}. Let again $T=(V,E)$ be a tree, $u_0\in V$ its anchor, and set $u_0$ to be the root of $T$. We want to determine the displacement of every $v\in V\setminus\{u_0\}$ and of every newly generated $v$ due to the parallel application of the operations in $Q$. As $u_0$ is stationary and each operation translates a subtree, only the operations on the unique $u_0v$ path contribute to $v$'s displacement.
In particular, any such operation contributes one of the unit vectors $\langle -1,0\rangle, \langle 0,-1\rangle, \langle +1,0\rangle, \langle 0,+1\rangle$ to the motion vector $\vec{v}$ of~$v$. We can use the set of motion vectors to determine whether the trajectories of any two nodes will collide at any point. This type of collision is called a \emph{node collision} (see Figure~\ref{fig:node-collision}).

Let $S$ be any connected shape with at least one cycle and any node $u_0$ be its anchor. Then, 
a set of parallel operations $Q$ on $S$ either causes a \emph{cycle collision} or its effect is essentially equivalent to the application of $Q$ on any spanning tree of $S$ rooted at $u_0$. 
Let $u$, $v$ be any two nodes on a cycle. If $p_1$ and $p_2$ are the two paths between $u$ and $v$ of the cycle, then $\vec{v}_{p_1}=\vec{v}_{p_2}$ must hold:
the displacement vectors along the paths $p_1$ and $p_2$ are equal. 
Otherwise, we cannot maintain all nodes or edges of the cycle. Such a violation is called a \emph{cycle collision} and an example is shown in Figure~\ref{fig:cycle-collision}. A set of operations is said to be \emph{collision free} if it does not cause any node or cycle collisions.

\begin{figure}[ht]
\centering 
\includegraphics[scale=0.65]{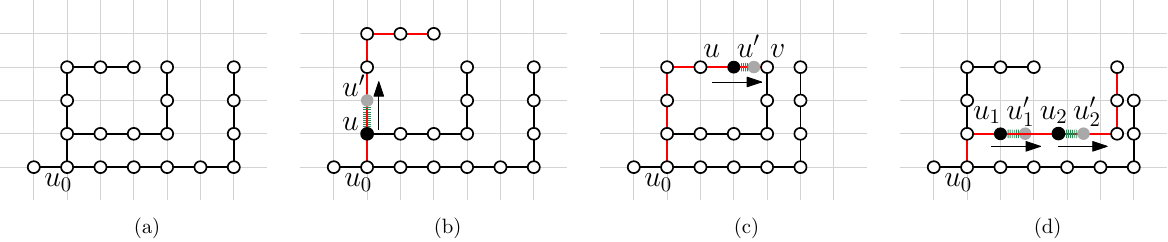}
\captionsetup{justification=justified, singlelinecheck=false}
\caption{(a) Initial tree $T$. (b) The tree $T'$ after a growth operation on node $u$ moves north to generate a new node $u'$ without any collision. (c) Illustration of a node collision scenario: the $T'$ here is a result of a growth operation applied on node $u$ toward the east, where a node $v$ already exists and $uv\notin E$. The newly generated node $u'$ occupies the same position as $v$, leading to a node collision. (d) Another scenario of a node collision: two nodes $u_1$ and $u_2$ simultaneously grow in the east direction and generate $u_1'$ and $u_2'$, though $u_1'$ and $u_2'$ do not collide directly, their growth pushes their branch into an adjacent branch, leading to a collision.}
\label{fig:node-collision}
\end{figure}

\begin{figure}[ht]
\centering 
\includegraphics[scale=0.65]{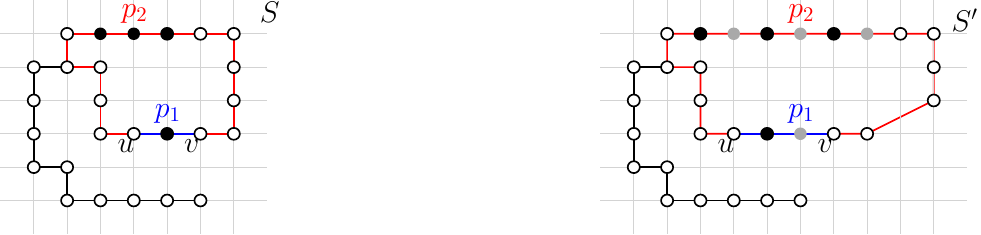}
\captionsetup{justification=justified, singlelinecheck=false}
\caption{An example of a cycle collision within the shape $S$ due to unequal displacement vectors along the two paths $p_1$ and $p_2$, thus, $\vec{v}_{p_1}\neq \vec{v}_{p_2}$. In particular, the number of generated nodes (gray nodes) along $p_2$ is greater than that along $p_1$. This difference in the number of generated nodes leads to a collision within the cycle, indicating an irregularity in the shape $S'$.
}
\label{fig:cycle-collision}
\end{figure}

A growth process $\sigma$ starts from an initial shape $S_0$ ---often a single node--- and,
in each time step $t\geq 1$, applies a set of parallel growth operations ---possibly a single operation--- on the current shape $S_{t-1}$ to give the next shape $S_t$, until a final shape $S$ is reached at a time step $t_f$. In this case, we say that \emph{$\sigma$ grows $S$ from $S_0$ in $t_f$ time steps}. We also assume that parallel operations have the same cardinal direction and that a node gets at most one operation per time step. These assumptions simplify the description of algorithms and can be easily dropped.

A distinction can be drawn between processes that can only form edges when generating new nodes and processes that can additionally connect nodes as soon as they become adjacent. We further distinguish between processes that avoid cycle collisions by preserving cycles and those that can also do so by breaking them. These are defined as follows.

\begin{definition} \label{def:models}
Let $S_t^b$ and $S_t^e$ denote the shapes formed by the beginning and by the end of time step $t$, respectively, and assume that $S_1^b=S_0$. A \emph{cycle-preserving} growth process applies a collision free set of parallel growth operations $Q_t$ to $S_t^b$, for all time steps $t\geq 1$.
A \emph{cycle-breaking} growth process additionally removes a ---possibly empty--- subset of the edges of $S_t^b$, whose removal does not disconnect the shape, before applying $Q_t$ to it. If \emph{neighbor handover} is allowed, growth of a node $u$ generating a new node $u'$ in direction $d$ can hand any neighbor $w$ of $u$ perpendicular to $d$ over to $u'$. 
In the \emph{connectivity graph} model, for all $t\geq 1$, $S_{t+1}^b=S_t^e$ holds. 
In the \emph{adjacency graph} model, for all $t\geq 1$, $S_{t+1}^b=AC(S_t^e)$ holds. 
\end{definition}

Intuitively, the additional assumption in the adjacency graph model is that, at the end of every time step, the graph model updates the shape by connecting all adjacent nodes that are not connected. Combining the adjacency graph model with cycle-breaking processes captures the less extreme case, in which the process can choose any spanning connected sub-shape of the adjacency closure. 

\begin{proposition}\label{prop:models-diff}
For the models of Definition \ref{def:models}, the following properties hold:
\begin{enumerate}
\item Under the connectivity 
model, the growth processes never increase the number of cycles.
\item Under the connectivity 
model, if $S_0$ is a single node, the processes can only grow trees. 
\item The cycle-preserving process never decreases the number of cycles.

\item Under the connectivity model, the cycle-preserving process preserves the number of cycles.
\end{enumerate}
\end{proposition}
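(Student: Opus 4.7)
The plan is to base all four parts on a single structural observation about individual growth operations in the connectivity graph model. Recall that, by the definition of a growth operation, case (i) generates a new node $u'$ in an unoccupied adjacent point and connects it to $u$ by one new edge, while case (ii) subdivides an existing edge $uv$: a new node $u'$ is inserted between $u$ and $v$, $uv$ is removed, and the two edges $uu'$ and $u'v$ are added. I would argue via the cyclomatic number $\mu(G) = |E(G)| - |V(G)| + c(G)$, where $c(G)$ is the number of connected components of $G$, which counts the independent cycles of $G$. In case (i), both $|V|$ and $|E|$ increase by one and $c$ is unchanged, so $\mu$ is unchanged. In case (ii), $|V|$ increases by one and $|E|$ also increases by one ($-1$ for the removed $uv$, $+2$ for $uu'$ and $u'v$), and $c$ is unchanged, so again $\mu$ is unchanged. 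The geometric translation of the subtree $T(v)$ triggered in case (ii) only relocates existing nodes and edges; it does not alter the incidence structure, hence it is irrelevant for $\mu$. A collision-free set of parallel operations is a composition of such updates, so it preserves $\mu$.

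Part 1 then follows immediately: in the connectivity model the only two kinds of updates are (a) growth operations, which preserve $\mu$ by the observation above, and (b) for cycle-breaking processes, the deletion of a subset of edges whose removal does not disconnect the shape, which strictly decreases $|E|$ while leaving $|V|$ and $c$ unchanged and therefore weakly decreases $\mu$. Part 2 follows from Part 1 together with the fact that $S_0$ is a single node, so $\mu(S_0)=0$ and $c(S_0)=1$; since growth operations and non-disconnecting edge deletions preserve connectivity, $c(S_t)=1$ throughout, and $\mu(S_t)=0$ for all $t\geq 0$, which is exactly the characterization of a tree.

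For Part 3, I would observe that a cycle-preserving process never deletes edges, so the only allowed updates are growth operations (which preserve $\mu$ by the observation) and, in the adjacency graph model, the closure operator $AC$. The latter replaces $E$ by a superset $E'\supseteq E$ while leaving $V$ and $c$ unchanged, and therefore weakly increases $\mu$. Hence $\mu$ is nondecreasing along any cycle-preserving process, which is exactly the statement of Part 3. Part 4 is then just the conjunction of Parts 1 and 3 specialized to cycle-preserving processes in the connectivity model, where both monotone bounds coincide to give exact preservation of $\mu$. The main obstacle, conceptually modest, is to cleanly separate the combinatorial cycle-count argument from the geometric content of the operations: by phrasing everything through $\mu$ I avoid having to reason about embeddings, and the claims reduce to counting edges, vertices, and components before and after each local update.
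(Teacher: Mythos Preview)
Your cyclomatic-number approach is cleaner than the paper's and gives a uniform argument where the paper does a somewhat ad~hoc case analysis. The paper first handles the case \emph{without} neighbor handover by observing that growth only adds leaves or lengthens segments, then separately argues that handover induces a one-to-one correspondence between old and new cycles by enumerating which cycles pass through the affected edges. Your use of $\mu(G)=|E|-|V|+c$ collapses all of this into local bookkeeping, which is an advantage.

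That said, there is a real omission: you never address \emph{neighbor handover}, which Definition~\ref{def:models} explicitly includes as an option for both process types, and which the paper treats as the main nontrivial case. Handover is not covered by your case~(i) or case~(ii): when $u$ generates $u'$ and hands a perpendicular neighbor $w$ over to $u'$, the edge $uw$ is deleted and $u'w$ is added. Your sentence ``the cycle-preserving process never deletes edges'' is therefore not literally true, and your decomposition of a step into only the two growth cases is incomplete. The fix is immediate within your framework --- handover removes one edge and adds one edge while $|V|$ and $c$ are unchanged (connectivity is maintained because $u'$ is adjacent to $u$), so $\mu$ is again preserved --- but you must say this explicitly, since otherwise Parts~1 and~3 are unproved for the models that allow handover.
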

\begin{proof}
    Property (2) is a special case of (1). Property (4) follows by taking (1) and (3) together. So, it is sufficient to prove properties (1) and (3). We first prove this \emph{without neighbor handover}. In that case, the cycle-preserving process cannot remove any edges, and neither do the graph models; thus, property (3) holds. Property (1) follows by observing that, without neighbor handover, the growth processes can only add leaves or increase the length of existing line segments and that the connectivity graph model does not modify any edges. We now show that these remain true when \emph{neighbor handover is allowed}. Let $u$ be a node on which a growth operation is applied, and $u_N,u_E,u_S,u_W$ its up to 4 neighbors in the respective cardinal directions. Without loss of generality, let $u^{\prime}_E$ be the node generated by the operation in the east direction. The nodes that can be handed over from $u$ to $u^{\prime}_E$ are $u_N$ and $u_S$. If we show that the number of cycles is invariant of handover for both types of processes, then propositions (1) and (3) will follow. It is sufficient to consider those cycles that before applying the operation were using edge $u_Nu$, $uu_S$, or both. If only $u_N$ is handed over to $u^{\prime}_E$ then any cycle using $u_Nuu_W$ is replaced by a cycle using $u_Nu^{\prime}_Euu_W$, any using $u_Nuu_E$ by one using $u_Nu^{\prime}_Eu_E$, and any using $u_Nuu_S$ by one using $u_Nu^{\prime}_Euu_S$. The case in which only $u_S$ is handed over is symmetric. If both $u_N$ and $u_S$ are handed over to $u^{\prime}_E$ then the only difference is that any cycle using $u_Nuu_S$ is now replaced by one using $u_Nu^{\prime}_Eu_S$. It follows that there is a one-to-one correspondence between previous and new cycles due to neighbor handover, which gives the required invariant.
\end{proof}

\subsection{Problem Definition}\label{subsec:problem-def}

In this paper, we study a reachability problem between classes of shapes through growth. The definition of the problem is the same for all growth models of Definition~\ref{def:models}.

Let $\mathcal{I}$ be a class of initial shapes and $\mathcal{F}$ a class of final shapes. We want to determine a bound $\tau$ such that for all $S_0\in \mathcal{I}$ and all $S\in \mathcal{F}$ there is a growth process $\sigma$ that grows $S$ from $S_0$ in $\tau$ time steps.

Given that our focus is on exponential growth, upper bounds must be of the form $\tau=O(\log n)$ or $\tau=(poly)\log n$. As there is a straightforward $\Omega(\log n)$ lower bound, non-trivial lower bounds should be at least $\omega(\log n)$. In all instances of the problem studied in this paper, at least one of $\mathcal{I}$ and $\mathcal{F}$ is a singleton, the initial shape typically being a single node. Our upper bounds are constructive: for each instance of the problem an algorithm is presented, which for every $(S_0,S)$ from the respective classes gives a process that grows $S$ from $S_0$ in $\tau$ time steps.

\subsection{Other Basic Notions and Properties}\label{subsec:basic-def}

Some of our results concern shapes drawn from special graph classes, such as paths, spirals, and staircases, which we now define alongside other basic notions and properties.

\emph{Turning point}. 
A node $u$ of a shape $S$ is called a \emph{turning point} 
if either $u$ is a leaf or there are a two neighbors $v_1$ and $v_2$ of $u$ such that $v_1u$ is perpendicular to $uv_2$. In the special case of a path shape $P=( u_1, u_2, \ldots, u_n )$, a node $u_i$ is a turning point if either $i \in \{1,n\}$ or $u_{i-1}u_i$ is perpendicular to $u_iu_{i+1}$. 
For uniformity of our arguments, we add the endpoints of a path to the set of its turning points. The nodes between any two consecutive turning points of a path form a line segment. A direction of an internal turning point $d(u_i)$ where $1\leq i < n$ is \emph{left} if the orientation changes from $d(u_i)$ to $d(u_{i+1})$ in a \emph{counterclockwise} or \emph{right} if the orientation changes from $d(u_{i})$ to $d(u_{i+1})$ in a \emph{clockwise} manner. 
Note that when a path $P$ is a sub-shape of a shape $S$, there might be turning points of $S$ in $P$ which are not turning points of $P$. We use the terms \emph{turning points} and \emph{turning points of} $P$ to distinguish between the two when ambiguity can arise. 

A \emph{column} or \emph{row of a shape} $S$ is a column or row 
of the grid occupied by nodes of $S$.
\emph{Compressible columns/rows}. We call a column or row of a shape $S$ \emph{compressible} if it contains no turning points and \emph{incompressible} otherwise. A compressible column contains no vertical line segments and intersects one or more horizontal line segments. 
The former holds because a vertical line segment must either end or turn within its column and the latter because, otherwise, the column would not contain any nodes of $S$. If $(C_l, C_{l+1},\ldots, C_{r})$ is a sequence of $r-l+1$ consecutive compressible columns of $S$, then any horizontal line segments $s$ intersected by one of these columns must be intersected by all. It follows that any such segment $s$ has length $r-l+1$ in $[C_l,C_r]$ and has a left neighbor $u_{l-1}(s)$ in $C_{l-1}$ and a right neighbor $u_{r+1}(s)$ in $C_{r+1}$. Similarly, for rows.

\emph{Compression}. 
Let $(C_l,C_{l+1},\ldots,C_{r})$ be a maximal sequence of consecutive compressible columns of $S$. A \emph{compression operation} on the sequence deletes the columns $C_l,C_{l+1},\ldots,C_{r}$ and for every horizontal segment $s$ intersected by these columns it connects $u_{l-1}(s)$ to $u_{r+1}(s)$. Compression on rows is defined analogously, by deleting rows in order to compress along vertical segments. 

\emph{Incompressible shape}. 
A shape is called \emph{incompressible} if it has no compressible columns or rows. We denote by $i(S)$ the incompressible shape obtained after compressing all compressible columns and rows of a shape $S$.

\begin{proposition}\label{proposition:compressed-shape}
Let $S$ be any shape having $k$ turning points and $l$ line segments. Then $i(S)$ satisfies the following properties:  
\begin{itemize}
\item $i(S)$ has $k$ turning points and $l$ line segments.
\item There is a one-to-one correspondence between the turning points and the line segments of $i(S)$ and $S$, such that $i(S)$ and $S$ are geometrically equivalent up to differences in the lengths of their line segments. In particular, the length of a line segment
in $i(S)$ is upper bounded by the length of its corresponding line segment 
in $S$. Moreover, the relative order of turning points with respect to each other is the same in both $S$ and $i(S)$.  
\item $i(S)$ fits within a $k\times k$ square area.
\end{itemize}
\end{proposition}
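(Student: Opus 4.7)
The plan is to proceed by induction on the number of compression operations needed to transform $S$ into $i(S)$, with the base case ($S$ already incompressible) being trivial. For the inductive step, I would analyze the effect of a single compression operation on a maximal sequence of consecutive compressible columns $(C_l,C_{l+1},\ldots,C_r)$ (the row case being symmetric), and show that each of the three stated properties is preserved by this operation. Iterating across all column and row compressions then yields the first two bullets.

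The key observation for the single-step analysis is that, by definition of compressibility, none of the deleted columns contains a turning point, and each such column intersects only horizontal line segments of $S$. Hence every node removed by the compression is an interior node of some horizontal line segment $s$, and the operation simply reconnects $u_{l-1}(s)$ to $u_{r+1}(s)$, shortening $s$ by $r-l+1$ units without altering its endpoints or any other turning point of $S$. This gives a natural bijection between turning points (and line segments) before and after compression, preserving incidences: if a turning point $u$ was the endpoint of a segment $s$, it remains the endpoint of the corresponding shortened segment. Since compression only deletes interior nodes of horizontal segments within fixed columns, it does not reorder any surviving node, so the relative positional order of turning points is preserved. This establishes the geometric-equivalence claim, with the length inequality following because every compression step only shortens segments.

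For the third bullet, once $i(S)$ is incompressible, every column and every row of $i(S)$ must contain at least one turning point, by the contrapositive of the definition of compressible. Since $i(S)$ has exactly $k$ turning points by the first bullet, and each turning point occupies exactly one column and one row, $i(S)$ spans at most $k$ distinct columns and at most $k$ distinct rows. Consequently $i(S)$ fits within a $k\times k$ bounding box.

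The main subtlety I expect will be formalizing ``geometrically equivalent up to differences in the lengths of their line segments'' and verifying it carefully when the shape is not a path, in particular when several disjoint horizontal segments simultaneously cross the compressed strip or when branches and cycles meet at the surviving columns $C_{l-1}$ and $C_{r+1}$. This requires checking that each such segment is treated independently by the compression and that no turning point at $C_{l-1}$ or $C_{r+1}$ has its local neighborhood structure (and hence its turning-point status) changed. A secondary subtlety is ensuring the inductive argument is insensitive to the order in which maximal compressible strips are processed; this follows from the fact that compressing one strip never creates or destroys another, since the property of being a turning point is invariant under the compression of a disjoint strip.
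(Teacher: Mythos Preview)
Your proposal is correct and follows essentially the same approach as the paper's proof: compression only shortens line segments while preserving their endpoints (hence all turning points and the segment structure), and the $k\times k$ bound follows because every row and column of the incompressible $i(S)$ must contain a turning point. Your inductive framing and the discussion of subtleties simply make explicit what the paper's two-sentence proof leaves implicit.
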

\begin{proof}
The first two properties hold because the only operation used to obtain $i(S)$ from $S$ is compression, which reduces the length of some of the line segments  
of $S$ and maintains the line segment endpoints. The third property follows by observing that every column and row of $i(S)$ must contain at least one turning point and there is a total of $k$ turning points.
\end{proof}

\begin{definition}[Staircase]\label{def:staircase}
A \emph{staircase} is a path whose turning points, when ordered from one endpoint to the other, alternate between two clockwise- or counterclockwise- consecutive cardinal directions.
\end{definition}
\begin{definition}[Spiral]\label{def:spiral}
    A \emph{spiral} is a path whose turning points, when ordered from one endpoint to the other, follow a continuous and unidirectional sequence of consecutive cardinal directions in either a clockwise or counterclockwise manner.  
\end{definition}

A spiral can be \emph{single} or \emph{double}. In what follows, we restrict attention to single spirals which always have an \emph{internal} and an \emph{external} endpoint. Let $(tp_1,tp_2,\ldots,tp_k)$ be the order of the $k$ turning points of a spiral, $tp_1$ being its internal and $tp_k$ its external endpoint, respectively. The consecutive turning points $tp_{i}$, $tp_{i+1}$, for $i\in \{1,2,\ldots,k-1\}$, define the $i$th line segment 
$s_i$ of the spiral. We denote by $l(s_i)$ the length of segment $s_i$ in edges.

\begin{proposition}\label{pro:spiral-inequalities}
A spiral satisfies $l(s_i)>l(s_{i-2})$, for all $i\in \{3,4,\ldots,k-1\}$. An \emph{incompressible spiral} satisfies $l(s_1)=l(s_2)=1$ and $l(s_i)=l(s_{i-2})+1$, for all $i\in \{3,4,\ldots,k-1\}$.\footnote{Note that there is an exception to this definition of incompressible spirals: the spirals in which the compression of a single row or column is blocked by the presence of the external endpoint in it. We disregard this case for clarity and it is straightforward to extend our results to include it.}
\end{proposition}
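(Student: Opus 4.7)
The plan is to fix coordinates first and then prove each of the two claims by induction on the segment index $i$. Without loss of generality I would place $tp_1$ at the origin, assume the spiral turns clockwise, and orient $s_1$ along the East direction; the definition of a spiral then forces the sequence of segment directions to cycle $E, S, W, N, E, S, \ldots$, and the turning points admit a closed form: $tp_2 = (l(s_1), 0)$, $tp_3 = (l(s_1), -l(s_2))$, $tp_4 = (l(s_1) - l(s_3), -l(s_2))$, $tp_5 = (l(s_1) - l(s_3), -l(s_2) + l(s_4))$, and in general each new coordinate is obtained from the previous one by incrementing or decrementing one axis by the corresponding segment length.

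For the strict inequality $l(s_i) > l(s_{i-2})$, I would argue by induction on $i$, using that the path is simple and that $tp_1$ is the internal endpoint of the single spiral. In the base case $i = 3$, suppose toward contradiction that $l(s_3) \le l(s_1)$; then $tp_4 = (l(s_1) - l(s_3), -l(s_2))$ lies in the vertical strip $0 \le x \le l(s_1)$ occupied by $s_1$. Since $tp_1$ is internal and $tp_k$ external, the remainder of the path must leave this strip through its northern side, which would require a later North-going segment to cross $s_1$ and revisit one of its nodes, contradicting simplicity. For the inductive step, the same local geometric argument transfers to the triple $(s_{i-2}, s_{i-1}, s_i)$: the induction hypothesis guarantees that the first $i-1$ segments form an outward-wrapping fragment, and an analogous crossing-or-containment dichotomy forces $l(s_i) > l(s_{i-2})$.

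For the incompressible spiral, I would use the characterization that every row and every column of the shape contains at least one turning point. To get $l(s_1) = 1$: if $l(s_1) \ge 2$, then the columns strictly between $x = 0$ and $x = l(s_1)$ lie inside the innermost loop, and by the strict inequality just proved every turning point $tp_j$ with $j \ge 4$ has $x$-coordinate either strictly less than $0$ or strictly greater than $l(s_1)$ (via a simple induction on the alternating partial sums $l(s_1) - l(s_3) + l(s_5) - \cdots$ whose magnitudes are strictly increasing); so these interior columns contain no turning points and are compressible, a contradiction. The identity $l(s_2) = 1$ follows by the row-symmetric argument. To show $l(s_i) = l(s_{i-2}) + 1$ for $i \ge 3$, the first part already gives $l(s_i) \ge l(s_{i-2}) + 1$; if instead $l(s_i) \ge l(s_{i-2}) + 2$, then the row or column adjacent to $s_{i-2}$ on its outer side falls strictly between the parallel lines of $s_{i-2}$ and $s_i$, and the same alternating-sum argument shows that no turning point of either the inner or the outer loops lands on that intermediate line, so it is a compressible row or column, again a contradiction.

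The main obstacle, and the step requiring the most care, will be formalising the \emph{must wrap outward} step in the induction. Concretely, I would need to show that whenever $l(s_i) \le l(s_{i-2})$ the remainder of the path is forced either to self-intersect or to terminate inside the region already enclosed, so that $tp_k$ cannot be the external endpoint of a single spiral. A clean way to carry this out is via a monotone invariant on the bounding rectangle of the path traced so far, arguing that after every two successive segments this rectangle strictly expands on the appropriate side; this invariant then supports the induction uniformly and yields both parts of the proposition.
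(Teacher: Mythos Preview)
Your plan is sound and in fact more careful than the paper's own argument. The paper disposes of the first claim in a single line, saying it ``follows directly from the definition of a spiral restricted to single spirals''; in other words, they treat the outward-wrapping inequality $l(s_i)>l(s_{i-2})$ as part of what it means to be a single spiral with a designated internal and external endpoint, rather than something to be derived from non-self-intersection. Your inductive argument via a monotone bounding rectangle is precisely the unpacking of that sentence, so the two proofs agree in content while differing in the level of detail supplied.

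For the incompressible case the two arguments also coincide in spirit: both show that $l(s_i)\ge l(s_{i-2})+2$ would leave some row or column without turning points. The paper phrases it as follows: the rows strictly between $tp_{i-2}$ and $tp_{i+1}$ (whose number, when $s_i$ is vertical, is exactly $l(s_i)-l(s_{i-2})-1$) could only be made incompressible by some later turning point $tp_{i+2j+1}$ landing there, and that would force $l(s_{i+2j})<l(s_{i+2j-2})$, contradicting the first claim. Your alternating-sum computation is just the direct verification of that impossibility, so the mechanism is the same.

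One point in your write-up needs correcting. The compressible line you want is \emph{perpendicular} to $s_{i-2}$ and $s_i$, lying just beyond $tp_{i-2}$ in the direction in which $tp_{i+1}$ overshoots it; the gap $l(s_i)-l(s_{i-2})$ measures how far $tp_{i+1}$ sticks out past $tp_{i-2}$ along that perpendicular axis. Your current phrasing (``adjacent to $s_{i-2}$ on its outer side \ldots\ between the parallel lines of $s_{i-2}$ and $s_i$'') instead describes a line \emph{parallel} to the two segments, and the spacing between those parallel lines is $l(s_{i-1})$, not $l(s_i)-l(s_{i-2})$. Once you realign the direction, your alternating-sum argument goes through exactly as outlined. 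The paper's proof does not treat $l(s_1)=l(s_2)=1$ separately, so your explicit handling of that base case is a genuine addition.
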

\begin{proof}
The first statement follows directly from the definition of a spiral restricted to single spirals. The second statement is true because $l(s_i)>l(s_{i-2})+1$ and the rows between $tp_{i-2}$ and $tp_{i+1}$ being incompressible can only hold at the same time if at least one of $tp_{i+3}, tp_{i+5}, \cdots, tp_{i+2j+1}$ is contained within that row, for $j \geq 1$. Let $tp_{i+2j+1}$ be such a vertex with the smallest index. However, these would imply that $l(s_{i+2j})<l(s_{i+2j-2})$, thus contradicting the definition of a spiral. 
\end{proof}

We denote by $\mathcal{C}_k$ the class of all shapes having a rooted spanning tree $T$ such that every root-to-leaf path in $T$ has at most $k$ turns. A \emph{fast line growth process} begins from a single node and by successively doubling all nodes grows a line segment of length $n$ in $O(\log n)$ time steps. 
Fast line growth is used as a sub-process in most of our constructions in order to efficiently grow line segments of a shape.

\section{Connectivity Graph Model}\label{sec:connectivity-graph-model}
According to Proposition{~\ref{prop:models-diff}} and starting growth from a single node, the class of shapes that can be grown in this model is limited to tree structures only. In the absence of the neighbor handover property, there is no difference between cycle-breaking and cycle-preserving growth, as there are no cycles. Thus, in Section{~\ref{sec:adj-graph-model}}, we study the capabilities of these growth processes within the adjacency model, and we start this section by focusing on efficiently growing trees.

\subsection{Upper Bounds}\label{subsec:upper-bound}

\begin{lemma}\label{lemma:compress-shape}
In the connectivity model, any shape $S$ can be grown from $i(S)$ in $O(\log n)$ time steps, $n$ being the number of nodes in $S$.
\end{lemma}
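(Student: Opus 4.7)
The plan is to leverage Proposition~\ref{proposition:compressed-shape}, which ensures that $i(S)$ and $S$ have the same turning points in the same relative order and there is a one-to-one correspondence between their line segments, with each segment of $i(S)$ no longer than its counterpart in $S$. So growing $S$ from $i(S)$ reduces to stretching every line segment of $i(S)$ to its target length in $S$, which is exactly what the fast line growth sub-process achieves exponentially fast.

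I would run the algorithm in two phases. In phase one, fix any anchor node of $i(S)$ and, in each time step, use east-directed parallel operations to double the width of every \emph{column gap} (the block of compressible columns in $S$ lying between two consecutive incompressible columns). Concretely, for every horizontal line segment crossing such a gap, a growth operation is applied to every other node in that portion of the segment, doubling the number of columns in the gap up to the target (truncating the last step as needed). Since every column gap has target width at most $n$, phase one finishes in $O(\log n)$ time steps. Phase two is symmetric: north-directed operations expand every \emph{row gap} to its target height in $O(\log n)$ further time steps. The total time is $O(\log n)$.

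What remains is to verify that every intermediate shape is collision-free. For \emph{node collisions}, after each time step the current shape has exactly the incompressible columns and rows of $S$, with every compressible gap inflated to some intermediate width between its value in $i(S)$ and its value in $S$. Because compressible rows and columns contain no turning points, such an intermediate differs from $S$ only by a uniform stretch along the non-turning axes and therefore inherits the absence of node collisions from $S$ (which is itself a valid shape). For \emph{cycle collisions}, because every line segment crossing a given gap is stretched by the same amount, the horizontal and vertical displacement contributions along the two paths of any cycle remain equal, preserving $\vec{v}_{p_1}=\vec{v}_{p_2}$ at every time step; combined with Proposition~\ref{prop:models-diff}(4), which says that cycles are preserved in the connectivity model, this handles the cyclic case.

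The main obstacle I anticipate is formalizing the sub-inflation argument when $S$ contains multiple cycles or several line segments sharing a row or column. I would argue inductively that after each doubling round the shape is the image of $S$ under a piecewise-linear axis-aligned scaling that contracts each compressible gap by a common factor, and that such scalings preserve both node-disjointness and cycle consistency because they act as the identity on the set of turning points and merely shrink the line segments joining them.
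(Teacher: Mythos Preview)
Your proposal is correct and takes essentially the same approach as the paper: invoke Proposition~\ref{proposition:compressed-shape} to reduce growing $S$ from $i(S)$ to stretching every line segment back to its target length via parallel fast line growth, with collision-freeness following from the preservation of the relative order of turning points. Your two-phase horizontal/vertical decomposition and the explicit gap-inflation and cycle-consistency arguments flesh out details that the paper's brief proof leaves implicit, but the core idea is identical.
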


\begin{proof}
By Proposition \ref{proposition:compressed-shape}, $S$ has the same turning points as $i(S)$, their only difference being the lengths of line segments connecting two turning points. To obtain $S$ from $i(S)$ it is sufficient to grow each of these segments to reach its final length in $S$. By doing so in parallel and through a fast line growth process for each segment, all segments can be grown in a number of time steps which is logarithmic in the maximum length to be grown, that is, $O(\log n)$ time steps in the worst case. Recall that as the relative order of turning points with respect to each other is the same in both $S$ and $i(S)$, so the above procedure does not result in any collision. 
\end{proof}

\begin{corollary}\label{coro:growing-compressed-shape}
In the connectivity model, any shape $S$ on $n$ nodes can be grown from a single node in $\tau_{min}(i(S))+O(\log n)$ time steps, where $\tau_{min}(i(S))$ is the minimum number of time steps in which $i(S)$ can be grown from a single node.   
\end{corollary}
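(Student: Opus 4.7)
The plan is to prove the corollary as a direct composition of two growth processes, using Lemma~\ref{lemma:compress-shape} as the second stage. Concretely, I would concatenate a process that grows $i(S)$ from a single node in the optimal $\tau_{min}(i(S))$ time steps with the process guaranteed by Lemma~\ref{lemma:compress-shape}, which grows $S$ from $i(S)$ in $O(\log n)$ additional time steps. Summing the two stages yields the claimed $\tau_{min}(i(S))+O(\log n)$ bound.

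First I would invoke the definition of $\tau_{min}(i(S))$ to obtain, by hypothesis, a collision-free growth process $\sigma_1$ in the connectivity graph model that starts from a single node and produces $i(S)$ in exactly $\tau_{min}(i(S))$ time steps. Next I would invoke Lemma~\ref{lemma:compress-shape} with the same shape $S$, obtaining a collision-free process $\sigma_2$ in the connectivity graph model that grows $S$ from $i(S)$ in $O(\log n)$ time steps by applying fast line growth in parallel to each line segment so as to reach its final length in $S$. The concatenated process $\sigma=\sigma_1\circ\sigma_2$ is well-defined because the final shape of $\sigma_1$ coincides with the initial shape of $\sigma_2$.

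The only point requiring a brief justification is that the concatenation inherits the collision-free property and remains within the connectivity graph model. For this I would note that (i) $\sigma_1$ is collision-free and produces $i(S)$ by assumption; (ii) $\sigma_2$ is collision-free on initial shape $i(S)$ by Lemma~\ref{lemma:compress-shape}, its correctness relying on Proposition~\ref{proposition:compressed-shape} which guarantees that the relative order of turning points is preserved between $i(S)$ and $S$, so no two parallel line-segment extensions in $\sigma_2$ can interfere; and (iii) both stages only form edges when generating new nodes, hence the combined process stays within the connectivity graph model. There is no real obstacle here: the corollary is essentially a restatement of Lemma~\ref{lemma:compress-shape} combined with the definition of $\tau_{min}(i(S))$, and the proof is one paragraph of straightforward composition.
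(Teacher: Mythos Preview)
Your proposal is correct and matches the paper's approach exactly: the paper states this as an immediate corollary of Lemma~\ref{lemma:compress-shape} without giving an explicit proof, and your composition argument (first grow $i(S)$ optimally, then apply Lemma~\ref{lemma:compress-shape}) is precisely the intended one-line justification.
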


Let $T$ be any tree shape having $O(k)$ turning points on every root-to-leaf path. We can combine BFS on line segments with fast line growth to grow $T$ in $O(k\log n)$ time steps. We now describe this approach. 
Every root-to-leaf path of $T$ is an alternating sequence of turning points and line segments of the form $(u_0,s_1,u_1,s_2,u_2,\ldots,u_{j-1},s_{j},u_{j})$, where $j\leq k-1$. We refer to $u_{i-1}$ and $u_i$ as the \emph{endpoints} of segment $s_i$ closer to, and further from $u_0$, respectively, even though we do not formally include them in $s_i$. 
The \emph{segment distance} of line segment $s_i$ (equivalently of its furthest endpoint $u_i$) from $u_0$ is equal to $i$ and corresponds to the number of line segments between $u_0$ and $u_i$.\\ 

\noindent\textbf{BFS on line segments}\\
\noindent Input: tree $T$ rooted at $u_0$\\
\noindent Output: growth process for $T$ starting from a single node at $u_0$\\
\noindent Repeat in phases $i=1,2,\ldots$ until $T$ is constructed: 
\begin{itemize}
\item Do in parallel for every line segment $s_i$ at segment distance $i$ from $u_0$ in $T$, where $u_{i-1}$, $u_i$ are its endpoints:
     \begin{itemize}
     \item Starting from $u_{i-1}$, grow $(s_i,u_i)$ by fast line growth. 
    \end{itemize}
\end{itemize}

\begin{theorem}\label{theo:shape-spanning-tree}
Let $T$ be any tree having $O(k)$ turns on every root-to-leaf path. \emph{BFS on line segments} can grow $T$ from a single node in $O(k \log n)$ time steps in the connectivity graph model.
\end{theorem}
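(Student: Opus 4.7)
The plan is to prove correctness and the time bound jointly, by induction on the phase index $i$ of the BFS on line segments algorithm. The inductive invariant is that, at the end of phase $i$, the grown shape equals the subtree $T^{(i)}$ of $T$ induced by all nodes within segment distance at most $i$ from the root $u_0$, with every line segment at segment distance at most $i$ already at its final target length. The base case ($i=0$) is the initial shape, the single node $u_0$, which we identify with $T^{(0)}$.

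For the inductive step, I would show that phase $i$ extends $T^{(i-1)}$ to $T^{(i)}$ in $O(\log n)$ time steps and without collisions. Each segment $s_i$ at distance $i$ is grown in parallel starting from its closer endpoint $u_{i-1}$, which is already present in the current shape by the inductive hypothesis. While fast line growth acts on $s_i$, every growth operation inside $s_i$ translates only (later) nodes of $s_i$ itself further away from $u_0$, because $u_0$ is the stationary anchor and there is, as yet, no subtree beyond $s_i$ that could be displaced. Consequently, operations on distinct segments $s_i^{(j)}$ and $s_i^{(l)}$ act on disjoint node sets and do not displace any part of the already constructed $T^{(i-1)}$.

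Collision-freeness among simultaneously growing segments then follows from the fact that $T$ is a valid planar shape: at every intermediate step, a partially grown $s_i^{(j)}$ occupies a prefix of its final grid positions in $T$, and such prefixes are pairwise disjoint across distinct segments at the same distance and also disjoint from $T^{(i-1)}$. For the time bound, fast line growth of any segment of length at most $n$ completes in $O(\log n)$ time steps, so the whole of phase $i$ completes in $O(\log n)$ steps regardless of how many segments it contains. Since, by hypothesis, every root-to-leaf path of $T$ contains $O(k)$ line segments, the algorithm runs for $O(k)$ phases, which yields the overall $O(k \log n)$ upper bound.

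The main delicate point I anticipate is the very first step of each phase: a turning point $u_{i-1}$ with several child segments must seed new nodes in several distinct cardinal directions, which is not directly permitted by the simplifying assumption that parallel operations share one cardinal direction and that each node receives at most one operation per time step. I plan to address this by appealing to the authors' explicit remark that those assumptions can be dropped, or equivalently by staggering the (at most four) cardinal directions over four time steps at the start of each phase, which only inflates the constant hidden in $O(k \log n)$.
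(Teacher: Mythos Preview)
Your proposal is correct and follows essentially the same approach as the paper's proof: induction on the phase index, arguing that each newly grown segment occupies its final (hence empty) positions in $T$ so no collision can occur, and deriving $O(k\log n)$ from $O(k)$ phases times $O(\log n)$ per phase via fast line growth. Your version is more explicit about the inductive invariant and additionally treats the technicality of a turning point seeding segments in several cardinal directions, which the paper leaves implicit under the remark that the single-direction assumption can be dropped.
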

\begin{proof}
Given $T$ rooted at $u_0$, the process returned by the algorithm starts from a single node at $u_0$. Then in each phase $i\geq 1$ it grows in parallel all line segments of $T$ at segment distance $i$ from $u_0$. By induction on $i$, it can be shown that this is a collision-free and correct construction of $T$. This is because, assuming that $u_{i-1}$ has been positioned correctly in phase $i-1$, the algorithm grows $(s_i,u_i)$ on a line of empty points corresponding to their positions in $T$. Moreover, the connectivity graph model ensures that the growth of a segment does not interfere with segments adjacent to it. The bound on the time steps of the returned process holds because the number of phases is upper bounded by $O(k)$ and every phase grows in parallel one or more segments. The length of segments is upper bounded by $n$, thus, any segment can be grown by fast line growth in $O(\log n)$ time steps.
\end{proof}

The above bound can be rather crude in some cases; for example, on a path consisting of one line segment of length $\Theta(n)$ and $O(\log n)$ segments of constant length each, the BFS on line segments yields an actual $O(\log n)$ time steps compared to the $O(\log^2 n)$ of the above analysis. However, this cannot be improved by a better analysis in the worst case, even for paths.

\begin{theorem}\label{the:BFS-counterexample}
Let $P$ be any path with $\Theta(\log n)$ turning points, such that each of the line segments of $P$ has length $\Theta(n/\log n)$. The process returned by BFS on line segments for $P$ uses $\Theta(\log^2 n)$ time steps.
\end{theorem}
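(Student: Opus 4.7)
The upper bound $O(\log^2 n)$ is immediate: since $P$ has $k = \Theta(\log n)$ turning points, Theorem~\ref{theo:shape-spanning-tree} applied to $P$ (viewed as a tree) already gives a BFS on line segments process of length $O(k\log n)=O(\log^2 n)$. So the content of the statement lies in the matching $\Omega(\log^2 n)$ lower bound on the process that BFS on line segments actually returns for $P$.

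The plan for the lower bound is to exploit two independent sources of cost of the algorithm: the number of phases, and the cost per phase. First I would observe that phases are executed strictly sequentially. By the description of the algorithm, in phase $i$ every parallel sub-process grows the segment $s_i$ starting at $u_{i-1}$, which is produced only at the end of phase $i-1$; since $P$ is a path, there is a unique such sub-process per phase and it cannot begin earlier than the time step after $u_{i-1}$ is placed. Therefore the total number of time steps used by the returned process equals $\sum_{i=1}^{k} t_i$, where $t_i$ is the number of time steps consumed by phase $i$ and $k = \Theta(\log n)$ is the number of line segments of $P$.

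Next I would bound $t_i$ from below. Each line segment of $P$ has length $\ell = \Theta(n/\log n)$, so phase $i$ must grow, from a single endpoint $u_{i-1}$, a path of $\ell$ new nodes using fast line growth. Any growth process in our model at most doubles the number of nodes per time step (each existing node produces at most one new node), so growing from one node to $\ell+1$ nodes requires at least $\lceil \log_2 (\ell+1)\rceil$ time steps. For $\ell = \Theta(n/\log n)$ this is $\log n - \Theta(\log\log n) = \Omega(\log n)$, so $t_i = \Omega(\log n)$ for every $i$.

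Combining the two bounds gives a total of $\sum_{i=1}^{\Theta(\log n)} \Omega(\log n) = \Omega(\log^2 n)$, which together with the upper bound yields the $\Theta(\log^2 n)$ claim. The step I expect to be slightly delicate is arguing carefully that the phases truly cannot overlap in the algorithm as defined (so that the additive decomposition $\sum t_i$ is a genuine lower bound and not just an upper bound on runtime); this requires pointing at the strict dependency $u_{i-1}\to s_i$ inherent in the pseudocode, together with the fact that on a path each phase contains a single segment, ruling out any amortization across phases.
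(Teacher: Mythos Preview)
Your proposal is correct and follows essentially the same approach as the paper: both arguments multiply the $\Theta(\log n)$ sequential phases by the $\Theta(\log(n/\log n))=\Theta(\log n)$ time steps needed to grow each segment via fast line growth. The paper simply states $\Theta(\log l(s_i))$ per phase without separately justifying the doubling lower bound or the sequentiality, while you spell these out, but the underlying argument is the same.
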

\begin{proof}
The algorithm uses $\Theta(\log n)$ phases to grow the segments and on each segment $s_i$ it takes $\Theta(\log l(s_i))=\Theta(\log(n/\log n))$ time steps. So, the total number of time steps is $\Theta(\log n\log(n/\log n))=\Theta(\log n(\log n-\log\log n))=\Theta(\log^2 n)$. 
\end{proof}

The following approach is designed to efficiently grow shapes with uniform growth patterns, such as spirals.\\

\noindent\textbf{Pipelined BFS}\\
\noindent Input: any shape $S$ with a uniform growth pattern\\
\noindent Output: growth process $\sigma$ for $S$\\
\noindent It consists of two iterative phases and proceeds regularly until no new turning points can be added and all line segments reach their final length.
\begin{itemize}
    \item \textbf{Construction and Waiting Phase}: during this phase, it builds at most three turning points $tp_i$ (where $i$ ranges from 1 to 3) in an order that follows the geometry of a shape $S$, each turning point $tp_i$ and its subsequent $tp_{i+1}$ defines the $i^{th}$ line segment $s_i$ of the shape $S$.
    \item \textbf{Parallel Growing Phase}: in this phase, the partially formed structure from 
    the previous step grows in parallel, except for those line segments that have reached their full length.
\end{itemize}
\begin{theorem}\label{theo:spiral-logn-turns}
    Let $P$ be a spiral with $k=O(\log n)$ turning points and $( tp_1, tp_2, $ $\dots, tp_k )$ the order of turning points of $P$, with $tp_k$ being the external endpoint. Starting from $tp_k$, Pipelined BFS can grow $P$ in $O(\log n)$ time steps in the connectivity graph model.
\end{theorem}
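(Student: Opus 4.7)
The plan is to apply Pipelined BFS starting from the single node at $tp_k$ and show that the resulting process produces $P$ in $O(\log n)$ time steps without any collisions. The intuition behind pipelining is that the addition of new turning points can be interleaved with the parallel doubling of all segments that have not yet reached their target length; by Proposition~\ref{pro:spiral-inequalities}, outer segments have larger targets than inner ones and are created first, so they naturally receive the most doubling steps, which is exactly what they need.

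First I would fix a concrete schedule: the process alternates between \emph{construction phases}, each of which uses up to three time steps to place new turning points one at a time at the current growing tip (each new $tp_{k-j}$ placed adjacent to the tip $tp_{k-j+1}$ in the perpendicular direction dictated by $P$, thereby initializing segment $s_{k-j}$ at length~$1$), and \emph{growth phases}, each consisting of a single time step in which every segment not yet at its target length $l(s_i)$ is doubled by one step of fast line growth. A segment \emph{freezes} once its length first reaches $l(s_i)$, and the whole process terminates when all $k$ turning points are present and all segments are frozen. Next, I would establish by induction on the time step an invariant stating that the current partial structure is a geometrically valid spiral with the same turn sequence as $P$ and with partial segment lengths $\widehat l(s_i) \le l(s_i)$ satisfying the spiral inequality of Proposition~\ref{pro:spiral-inequalities}. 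A construction step appends a unit-length segment at the current tip, which lies outside the region enclosed by the rest of the partial spiral and therefore cannot collide with existing nodes; a growth step scales every unfinished partial length by a factor of two, preserving the strict inequality among unfinished segments, while already frozen outer segments only reinforce it because $\widehat l(s_i) = l(s_i) > l(s_{i-2}) \ge \widehat l(s_{i-2})$.

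For the time bound, I would argue as follows. There are $k-1 = O(\log n)$ turning points to place, spread across $O(k) = O(\log n)$ construction time steps. Each segment $s_{k-j}$ is created at some construction step $O(j)$ and then doubles at every subsequent growth step until its length reaches $l(s_{k-j}) \le n$, which takes at most $\lceil \log l(s_{k-j}) \rceil \le \log n$ growth steps; because constructions and growths are interleaved, segment $s_{k-j}$ completes by time $O(j) + O(\log l(s_{k-j})) = O(\log n)$, so the entire process terminates within $O(\log n)$ time steps. The hardest part will be the collision analysis, because the partial structure at intermediate time steps mixes freshly created short segments with longer ones that may already be frozen, and one must verify that none of these intermediate configurations causes two nodes to coincide or a growth-induced displacement to violate the spiral shape. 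I expect the invariant above to suffice, but careful bookkeeping is needed near the current tip, where the spiral inequality involves fewer preceding segments, and around the footnoted exception of Proposition~\ref{pro:spiral-inequalities}.
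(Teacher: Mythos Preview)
Your proposal is correct and takes essentially the same approach as the paper: start from the external endpoint $tp_k$, alternate between placing a few new turning points at the current tip and doubling all unfinished segments in parallel, and bound the total time by $O(k)+O(\log n)=O(\log n)$. Your plan is in fact more explicit than the paper's own proof, which neither fixes a concrete schedule nor states an invariant for collision-freeness and instead appeals informally to the ``uniform geometry of the spiral''; your per-segment completion bound $O(j)+O(\log l(s_{k-j}))$ makes the pipelining argument precise.
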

\begin{proof}
The algorithm grows $P$, starting from the external endpoint $tp_k$ and extending toward the internal endpoint $tp_1$. In phase $i=1$, it grows the first three turning points $tp_k, tp_{k-1}$ and $tp_{k-2}$, forming segments $s_{k-1}$ and $s_{k-2}$, as the outermost layer of $P$. This phase efficiently grows the segments in parallel and simultaneously creates the space for the subsequent inner layers. The parallelism of the \emph{Pipelined-BFS} ensures that subsequent layers are initiated while the growth of one layer is completed. Due to the uniform geometry of the spiral, this approach consistently grows the line segments in a pipelined fashion. As the spiral path $P$ has $\log n$ layers, we consume $\log n $ waiting time for each layer to grow. Therefore, the total number of time steps required to grow $P$ is bounded by the number of turning points $\log n$ plus the waiting time for each layer to grow, which is $\log n$; thus, the spiral path $P$ can be grown in $O(\log n)$ steps.
\end{proof}

\subsection{Lower Bounds}\label{subsec:lower-bound-connectivity-graph-model}

There is a generic lower bound which follows immediately due to the limit on the rate at which nodes can generate new nodes; a limit inherent to all the models we consider.

\begin{proposition}\label{pro:general-lower-bound-2d}
Let $S$ be any shape on $n$ nodes. In any of the considered models, any growth process for $S$ starting from a single node, requires $\Omega(\log n)$ time steps.
\end{proposition}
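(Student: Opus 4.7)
The plan is to establish the lower bound via a straightforward doubling argument on the number of nodes present in the shape over time. The key observation to invoke is the assumption (stated in the model section right after the definition of parallel operations) that each node receives at most one growth operation per time step, and each growth operation generates exactly one new node. Consequently, if $n_t$ denotes the number of nodes present after $t$ time steps, then $n_{t+1} \le 2 n_t$, since in the worst case every existing node participates in a growth operation and thereby generates one additional node.

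I would then proceed by induction on $t$ starting from $n_0 = 1$ (the initial shape is a single node) to conclude $n_t \le 2^t$ for every $t \ge 0$. Combining this with the requirement that the final shape $S$ has $n$ nodes, so that $n_{t_f} = n$ at the final time step $t_f$, we obtain $n \le 2^{t_f}$, equivalently $t_f \ge \log_2 n$. This yields the desired $\Omega(\log n)$ bound.

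Finally, I would note that this argument is model-agnostic in the relevant sense: it uses only the per-time-step doubling cap on node counts, which is shared by the connectivity graph and adjacency graph models, and by both cycle-preserving and cycle-breaking variants (edge deletions and neighbor handovers do not create new nodes, and adding adjacency-closure edges likewise does not change the node count). Hence the bound applies uniformly to all of the considered models. There is no real obstacle here; the only care needed is to spell out the doubling invariant cleanly and to record why none of the extra features (handover, edge deletion, adjacency closure) affect the node-count recurrence.
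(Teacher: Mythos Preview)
Your proposal is correct and follows essentially the same approach as the paper: both argue that the number of nodes can at most multiply by a constant factor per time step (you use factor $2$ from the one-operation-per-node assumption, the paper also notes the degree bound giving $c\le 5$), then invert the resulting exponential bound to get $\Omega(\log n)$. Your explicit remark that edge deletions, neighbor handover, and adjacency-closure edges do not affect the node count is a welcome clarification that the paper leaves implicit.
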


\begin{proof}
The lower bound follows by observing that in every time step, each node can generate a number of nodes which is upper bounded by its maximum degree in the model, which is 4 on the two-dimensional square grid. In the special case that we typically consider, in which every node is allowed to generate at most one node per time step, this upper bound becomes 1. In all cases, if $N_i$ is the number of nodes in the beginning of time step $i\geq 1$, the number of nodes at the end of time step $i$ is at most $cN_{i}$, for a constant $c\leq 5$, implying inductively that the total number of nodes at the end of time step $i$ can be at most $c^i\leq 5^i$. It follows that $\Omega(\log n)$ time steps are required to generate $n$ nodes when starting from a single node.
\end{proof}

Refined lower bounds can be obtained as a function of a parameter of the final shape or by restricting attention to special classes of final shapes. In Theorem \ref{the:lower-bound-trees}, we prove that the number of turning points of root-to-leaf paths of trees gives a lower bound on the number of time steps of any process for them. By focusing on path shapes, in Theorem \ref{the:lowerBound} we obtain an $\Omega(k\log k)$ lower bound on the number of time steps of any process for a specific worst-case type of path, $k$ being the total number of turning points of the path.

\subsubsection{Trees}\label{subsubsec:lb-trees}

Let $S=(V,E)$ be a shape. Any growth process $\sigma$ for $S$ induces a relation $\rightarrow_{\sigma}$ on $V$, where $u\stackrel{t}\rightarrow_{\sigma} v$ iff node $u$ \emph{generates} node $v$ at time step $t$. We also write $u\rightarrow_{\sigma} v$ to mean $u\stackrel{t}\rightarrow_{\sigma} v$ for some $t\geq 1$ and $u\rightsquigarrow_{\sigma} v$ iff $u=u_1\rightarrow_{\sigma}u_2\rightarrow_{\sigma}\cdots\rightarrow_{\sigma}u_l=v$, for some $l\geq 2$. We omit $\sigma$, writing just $u\stackrel{t}\rightarrow v$, $u\rightarrow v$, or $u\rightsquigarrow v$ when the growth process is clear from context or when referring to any growth process. The relation $\rightarrow_{\sigma}$ defines a graph $G_{\rightarrow_{\sigma}}=(V,E_{\rightarrow_{\sigma}})$. The following proposition gives some properties of this graph.

\begin{proposition}\label{prop:tree-property}
$G_{\rightarrow_{\sigma}}$ is an out-tree spanning $V$, rooted at the initial node $u_0$.
\end{proposition}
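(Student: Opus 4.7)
The plan is to verify, one by one, the four properties that define an out-tree spanning $V$ rooted at $u_0$: every non-root has a unique in-neighbor, $u_0$ has in-degree zero, the graph is acyclic, and every node is reachable from $u_0$.

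First I would record the basic fact about growth operations: each operation is applied on a single existing node $u$ and generates exactly one new node $u'$, and each node of $S$ other than $u_0$ comes into existence at a unique time step, through exactly one such operation. This immediately gives that every $v \in V \setminus \{u_0\}$ has exactly one $u$ with $u \rightarrow_{\sigma} v$, and that $u_0$ is never generated, so its in-degree in $G_{\rightarrow_\sigma}$ is zero. In particular, $|E_{\rightarrow_\sigma}| = |V| - 1$.

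Next I would establish acyclicity via the time-stamp function $t(v)$ that assigns to each node its time step of creation (with $t(u_0) = 0$). Whenever $u \rightarrow_\sigma v$ we have $t(v) > t(u)$, since the generating node $u$ must already exist when the operation is applied. A directed cycle (or even an undirected one, given the in-degree bound) would force a strict increase of $t$ around it, which is impossible. Combined with the in-degree count above, this already shows that $G_{\rightarrow_\sigma}$ is a forest of out-trees whose roots are exactly the nodes of in-degree zero, i.e., $\{u_0\}$.

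Finally I would prove that every node is reachable from $u_0$ by induction on $t(v)$. The base case $v = u_0$ is trivial. For $t(v) \geq 1$, the unique $u$ with $u \rightarrow_\sigma v$ satisfies $t(u) < t(v)$, hence by induction $u_0 \rightsquigarrow_\sigma u$, and appending the edge $u \to v$ gives $u_0 \rightsquigarrow_\sigma v$. Putting the three ingredients together yields the claim. I do not anticipate any genuine obstacle here; the only point requiring a little care is to make explicit the convention that a growth operation contributes exactly one edge to $\rightarrow_\sigma$ (from the operating node to the newly generated node), so that the in-degree count is really one and not more, even under parallel application of operations within a single time step.
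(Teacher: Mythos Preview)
Your proposal is correct and follows essentially the same line as the paper's own proof: both establish reachability from $u_0$ by induction on the time step of generation, and both rule out cycles via the strict time ordering $t(u)<t(v)$ whenever $u\rightarrow_\sigma v$. The one difference worth noting is that you explicitly verify the in-degree-one property (each non-root node has a unique generator, and $u_0$ has in-degree zero), whereas the paper leaves this implicit in the model; your version is therefore slightly tighter, since directed acyclicity together with reachability from $u_0$ does not by itself force an out-tree without that in-degree bound.
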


\begin{proof}
Observe that every node $u\in V$ is generated by $\sigma$ at some time step $t_l$. Therefore, by induction on time steps, there must be a sequence $u_0\stackrel{t_1}\rightarrow u_1\stackrel{t_2}\rightarrow\cdots u_{l-1}\stackrel{t_l}\rightarrow u$ starting at the initial node $u_0$. Thus, there exists a path from $u_0$ to $u$ in $G_{\rightarrow_{\sigma}}$. It follows that $G_{\rightarrow_{\sigma}}$ is connected, spanning $V$, and is rooted at $u_0$.

It remains to show that $G_{\rightarrow_{\sigma}}$ is acyclic. For any $u_1\in V$, a cycle $u_1,u_2,\ldots,u_l,u_1$, $l\geq 2$, would imply that there is a sequence $u_1\stackrel{t_1}\rightarrow u_2\stackrel{t_2}\rightarrow\cdots u_l\stackrel{t_l}\rightarrow u_1$. Note that, in general, $u\stackrel{t}\rightarrow v\stackrel{t'}\rightarrow w$ implies $t'>t$ as, by assumption of the model, a node $v$ can only generate a node after $v$ is itself generated. Therefore, it must hold that $t_1<t_2<\cdots <t_l$ and the existence of the cycle $u_1,u_2,\ldots,u_l,u_1$ would imply the contradictory statement that node $u_1$ was generated both before time step $t_1$ and after $t_l>t_1$.
\end{proof}

We now define a relation $\mapsto_{\sigma}$ induced by $\rightarrow_{\sigma}$ on the turning points of tree shapes. In particular, given a tree $T$ and a growth process $\sigma$ for $T$, for any two turning points $u,v$ of $T$ we write $u\stackrel{t}\mapsto_{\sigma} v$ iff (i) $u\stackrel{t}\rightarrow_{\sigma} v$ or (ii) $u\rightsquigarrow u'\stackrel{t}\rightarrow_{\sigma} v$ and $u$, $u'$, $v$ are on the same line segment at the end of time step $t$.
We again write $u\mapsto_{\sigma} v$ to mean $u\stackrel{t}\mapsto_{\sigma} v$ for some $t\geq 1$, and will often omit $\sigma$. The relation $\mapsto_{\sigma}$ defines a graph $G_{\mapsto_{\sigma}}$ on turning points whose structure has a quite strong dependence on the structure of $T$.

\begin{lemma}\label{lemma:graph-equality}
In the connectivity model, let $T=(V,E)$ be a tree and $\sigma$ a growth process for $T$ starting from $u_0\in V$. For any root-to-leaf path $( u_0,u_1,\ldots,u_l)$ of $T$, where the $u_i$s are restricted to the turning points of the path, $u_0\mapsto u_1\mapsto\cdots \mapsto u_l$ holds. 
\end{lemma}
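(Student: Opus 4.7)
The plan is to prove the chain $u_0 \mapsto u_1 \mapsto \cdots \mapsto u_l$ by induction on $i$, establishing $u_i \mapsto u_{i+1}$ at each step. Fix the line segment of $T$ joining the two consecutive turning points and denote its nodes by $u_i = p_0, p_1, \ldots, p_m = u_{i+1}$. Let $t$ be the time step at which $u_{i+1}$ is first generated, and let $w$ be its generator, so $w \stackrel{t}{\rightarrow} u_{i+1}$. The goal is to show that $w = p_k$ for some $k \in \{0, 1, \ldots, m-1\}$ and, when $k \geq 1$, to verify the two ingredients that clause~(ii) of the definition of $\mapsto$ requires.

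The first step exploits a combinatorial invariant specific to the connectivity model: every growth operation is either case~(i), appending a leaf, or case~(ii), inserting a node inside an existing edge. Neither removes a node from any previously existing $u_0$-to-$x$ path, so as time progresses these paths can only be lengthened by insertions. Consequently the unique $u_0$-to-$u_{i+1}$ path in $S_t^e$ is, as a sequence of node identities, a subsequence of the unique $u_0$-to-$u_{i+1}$ path in $T$, which is $u_0, \ldots, u_i, p_1, \ldots, p_{m-1}, u_{i+1}$. Since this $S_t^e$-path ends with $w, u_{i+1}$, I obtain $w \in \{u_0, \ldots, u_{i-1}, p_0, \ldots, p_{m-1}\}$.

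Next I would upgrade this to $w \in \{p_0, \ldots, p_{m-1}\}$ using geometry. A separate invariant, provable by induction on time steps, says that $S_t^e$ is embedded as a ``compressed'' sub-structure of $T$ with the same cardinal layout around every turning point; case~(i) and case~(ii) both preserve this because they respectively extend a leaf in a $T$-consistent direction and rigidly translate a subtree along its own segment. Since $u_i$ is a turning point, this invariant forces any $u_j$ with $j < i$ to lie at $L_1$ grid distance at least $2$ from $u_{i+1}$ in $S_t^e$, ruling out grid-adjacency to $u_{i+1}$ and hence eligibility to be $w$. Thus $w = p_k$ for some $k \in \{0, \ldots, m-1\}$. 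If $k = 0$, clause~(i) of $\mapsto$ applies directly. Otherwise, a sub-induction on $j \in \{1, \ldots, k\}$ yields $u_i \rightsquigarrow p_j$: applying the same subsequence-plus-geometry argument to $p_j$ in place of $u_{i+1}$ shows that the generator of each $p_j$ is some $p_{j'}$ with $j' < j$, so the chain collapses back to $p_0 = u_i$. For the common-line-segment condition, each internal $p_j$ has in $T$ only the two collinear neighbors $p_{j-1}, p_{j+1}$, and no growth consistent with $T$ can attach a perpendicular neighbor to such a $p_j$; hence no internal segment node is a turning point of $S_t^e$, and the segment nodes present in $S_t^e$ together with $u_i$ and $u_{i+1}$ form a single line segment containing $u_i$, $w$, and $u_{i+1}$ at the end of time step $t$.

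The step I expect to be most delicate is the geometric invariant used to discard the $u_j$'s with $j < i$ as candidates for $w$. It rests on rigid translation being the only way embeddings evolve under the two growth cases, and I would isolate it as its own lemma, proving it by a dedicated induction on time steps before plugging it into the main argument; without that cleanly separated statement, the jump from the combinatorial subsequence conclusion to the geometric one would be the source of most potential errors.
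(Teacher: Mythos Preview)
Your overall strategy---induction on $i$, identifying the generator $w$ of $u_{i+1}$ at the $\rightarrow$ level, and showing it lies on the $u_i$--$u_{i+1}$ segment---is a reasonable direct alternative to the paper's argument, and your subsequence invariant (the $u_0$-to-$u_{i+1}$ path in $S_t^e$ is a subsequence of that path in $T$) is correct and useful. The paper instead argues by contradiction directly at the level of the $\mapsto$ relation: assuming some turning point $u\neq u_i$ satisfies $u\mapsto u_{i+1}$, it uses persistence of line segments to rule out $u\in\{u_0,\ldots,u_{i-1}\}$ and acyclicity of $T$ to rule out every other choice. Working with $\mapsto$ rather than $\rightarrow$ lets the paper sidestep the geometric analysis you attempt.

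That geometric step, which you rightly flag as the delicate one, contains a genuine error. The claim that any $u_j$ with $j<i$ lies at $L_1$ grid distance at least $2$ from $u_{i+1}$ in $S_t^e$ is false. Take $T$ to be a U-shaped path with turning points $u_0=(0,0)$, $u_1=(0,5)$, $u_2=(5,5)$, $u_3=(5,0)$. A valid process for $T$ may first generate $u_1,u_2,u_3$ in three consecutive steps as leaves in the correct cardinal directions, placing them at $(0,1),(1,1),(1,0)$, and only afterwards stretch the three segments to their final lengths. At the moment $u_3$ is generated, $u_0$ sits at $(0,0)$ and $u_3$ at $(1,0)$: they are grid-adjacent, so your distance bound fails. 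Compression can bring non-consecutive turning points arbitrarily close geometrically.

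The correct reason a node strictly before $u_i$ on the $u_0$--$u_{i+1}$ path in $T$ cannot be $w$ is not distance but direction: if such a node generated $u_{i+1}$, the resulting edge---and hence, because edges in the connectivity model can only ever be subdivided collinearly, the entire final $w$-to-$u_{i+1}$ path in $T$---would be a straight line segment; but that path passes through the turning point $u_i$, a contradiction. This is exactly the ``a line segment, once formed, is forever maintained'' observation the paper's Case~1 exploits, and once you phrase your invariant that way your direct argument does go through. A minor related oversight: your enumeration of candidates for $w$ omits the non-turning-point nodes lying on earlier segments of the path; the same direction argument disposes of them as well.
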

\begin{proof}

We prove the statement by induction on the index $i$ of the turning points on the path. The base case, for $i=0$, holds trivially as node $u_0$ of $T$ is by assumption the initial node of the process. Assume that the statement holds for some $i\geq 0$, that is, for the $( u_0,u_1,\ldots,u_i)$ prefix of the $( u_0,u_1,\ldots,u_i,u_{i+1},\ldots,u_l)$ root-to-leaf path of $T$, the process $\sigma$ satisfies $u_0\mapsto u_1\mapsto\cdots \mapsto u_i$. We will show that this must imply that $u_i\mapsto u_{i+1}$. For the sake of contradiction, assume not. Then, due to connectivity of $G_{\mapsto_{\sigma}}$ it must hold that $u\mapsto u_{i+1}$ for some turning point $u\neq u_i$ of $T$. There are two possible cases:
\begin{enumerate}
\item $u\in\{u_0,\ldots,u_{i-1}\}$: As $u\stackrel{t}\mapsto u_{i+1}$ for some $t$, it holds that $u$ and $u_{i+1}$ are joined by a line segment at the end of time step $t$. In the connectivity model, a line segment, once formed, is forever maintained and can only grow, implying that $u$ and $u_{i+1}$ must be connected by a line segment in $T$. But this can only hold if $u=u_{i}$, without contradicting one or more $u_j$s being turning points of the path or their order on the path. The latter, in turn, contradicts $u$'s assumed inclusion in $\{u_0,\ldots,u_{i-1}\}$.
\item $u\notin\{u_0,\ldots,u_i\}$: In this case, it holds that $u_0\rightsquigarrow u\mapsto u_{i+1}$ and also by the inductive hypothesis that $u_0\mapsto u_1\mapsto\cdots \mapsto u_{i-1}\mapsto u_i$. Then, $( u_0,\ldots,u,u_{i+1})$ and $( u_0,u_1,\ldots,u_{i-1},u_i,u_{i+1})$ must both be paths of $T$, because for any two turning points $v,v'$, $v\mapsto v'$ implies a line segment between $v$ and $v'$ in $T$. It follows that $u_0,u_1,\ldots,u_i, $ $u_{i+1},u,\ldots,u_0$ (the line segments between turning points inclusive) forms at least one cycle in $T$, contradicting the fact that $T$ must be a tree in the connectivity model. 
\end{enumerate}
We conclude that the inductive step $u_i\mapsto u_{i+1}$ must hold and this completes the proof.
\end{proof}

Lemma \ref{lemma:graph-equality} says that the turning points of any tree shape $T$ that is constructed in the connectivity model, can only be generated in their root-to-leaf order in a rooted version of $T$. Note that this does not necessarily hold for non-turning points of $T$. The construction of the latter can bypass their actual order in $T$ by, for example, growing line segments of such points through a fast line growth process. 

\begin{theorem}\label{the:lower-bound-trees}
Let $T=(V,E)$ be a tree and $k$ any positive integer satisfying that for every root $u_0\in V$ there is a root-to-leaf path in $T$ containing at least $k$ turning points. Then any growth process $\sigma$ for $T$ in the connectivity model requires at least $k-1$ time steps. This lower bound is maximized for the maximum such $k$.
\end{theorem}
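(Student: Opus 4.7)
The plan is to reduce the theorem to a chain argument on $\mapsto_{\sigma}$ produced by Lemma \ref{lemma:graph-equality}, and then show that any such chain of length $k-1$ forces the process to use at least $k-1$ time steps. Fix an arbitrary growth process $\sigma$ for $T$ and let $u_0$ be its initial node; root $T$ at $u_0$. By the hypothesis of the theorem applied to this specific root, there is a root-to-leaf path of $T$ that contains at least $k$ turning points; restrict this path to its turning points and list them in order as $u_0,u_1,\ldots,u_{k-1}$ (with $u_0$ itself being a turning point, as it is the initial node and a leaf in the tree orientation used by Lemma \ref{lemma:graph-equality}). By Lemma \ref{lemma:graph-equality}, the process $\sigma$ satisfies
\[
u_0\mapsto_{\sigma}u_1\mapsto_{\sigma}\cdots\mapsto_{\sigma}u_{k-1}.
\]

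The next step is to assign a time step $t_i\geq 1$ to each relation $u_i\stackrel{t_i}{\mapsto}_{\sigma}u_{i+1}$, for $i=0,1,\ldots,k-2$, and to argue that $t_0<t_1<\cdots<t_{k-2}$. Unfolding the definition of $\mapsto_{\sigma}$, at time step $t_i$ some node $w_i\in\{u_i\}\cup\{v:u_i\rightsquigarrow_{\sigma}v\}$ executes $w_i\stackrel{t_i}{\rightarrow}_{\sigma}u_{i+1}$, so $u_{i+1}$ first exists at the end of time step $t_i$. For the next link to occur at time $t_{i+1}$, the generator $w_{i+1}$ of $u_{i+2}$ is either $u_{i+1}$ itself or a descendant of $u_{i+1}$ in the generation relation $\rightarrow_{\sigma}$; in both cases, $w_{i+1}$ can only generate strictly after $u_{i+1}$ came into existence, hence $t_{i+1}>t_i$. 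Combined with $t_0\geq 1$ (no operation occurs before time step $1$), induction on $i$ yields $t_{k-2}\geq k-1$.

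Because $u_{k-1}$ is generated at time step $t_{k-2}\geq k-1$, the process $\sigma$ cannot terminate before time step $k-1$, proving the $k-1$ lower bound. The final sentence of the theorem (``maximized for the maximum such $k$'') is then an immediate monotonicity remark: if the hypothesis holds for $k$ it holds for every $k'\leq k$, so the strongest lower bound is obtained by taking $k$ to be the minimum over roots $u_0\in V$ of the maximum number of turning points on a root-to-leaf path from $u_0$.

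The main obstacle I anticipate is the careful bookkeeping in the strict inequality $t_{i+1}>t_i$. In principle, one might fear that a descendant $w_{i+1}$ of $u_{i+1}$ generated in the \emph{same} time step as $u_{i+1}$ could already execute a generation; this is ruled out by the model assumption (used in Proposition~\ref{prop:tree-property}) that a node may generate only after it has itself been generated, so every $\rightarrow_{\sigma}$-edge strictly increases the time stamp. Once that is spelled out, the chain-length-to-time-step conversion is straightforward, and the overall argument reduces cleanly to invoking Lemma \ref{lemma:graph-equality} and the acyclicity/ordering of $G_{\rightarrow_{\sigma}}$.
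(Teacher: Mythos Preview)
Your proposal is correct and follows essentially the same approach as the paper's proof: fix $\sigma$, root $T$ at its initial node $u_0$, invoke Lemma~\ref{lemma:graph-equality} on a root-to-leaf path with at least $k$ turning points to obtain a $\mapsto_\sigma$-chain, and then use the strict time-monotonicity of $\rightarrow_\sigma$ (as in Proposition~\ref{prop:tree-property}) to conclude $t_{k-2}\geq k-1$. Your write-up is somewhat more explicit than the paper's in unpacking why $t_{i+1}>t_i$ via the definition of $\mapsto_\sigma$, but the structure and the key lemma used are identical.
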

\begin{proof}
The process $\sigma$ must start from a node $u_0$. Take any mapping of $u_0$ to a node of $T$ and call the latter $u_0$ too. $T$ rooted at $u_0$ has a root-to-leaf path $( u_0,u_1,\ldots,u_l)$ restricted on turning points, where $l\geq k$. Due to Lemma \ref{lemma:graph-equality}, $\sigma$ must satisfy $u_0\stackrel{t_1}\mapsto u_1\stackrel{t_2}\mapsto\cdots \stackrel{t_{l-1}}\mapsto u_l$. The latter implies $1\leq t_1<t_2<\cdots<t_{l-1}$, thus, $t_{l-1}\geq l-1\geq k-1$.
\end{proof}

\subsubsection{Paths}\label{subsubsec:lb-paths}

Let $P$ be a path with $k$ turning points. Let $\sigma$ be a process that grows $P$ from a single node. Without loss of generality, we can assume that $\sigma$  starts from a turning point of the path $P$. We now give a few observations and lemmas concerning some properties of $\sigma$. Recall that an edge, once generated, cannot be deleted in this model. This immediately implies the following observation.

\begin{observation}\label{obs:growAlong}
    A node can grow in at most its degree many different directions. Moreover, once a node has its degree many neighbors in the path constructed by $\sigma$, it can only grow along one of its incident edges in the path.
\end{observation}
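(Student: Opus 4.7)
\medskip

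\noindent\textbf{Proof proposal for Observation \ref{obs:growAlong}.}
The observation has two parts, and both hinge on a single invariant of the connectivity model: for every node $u$, the set of cardinal directions in which $u$ has an incident edge is monotonically non-decreasing over time. The plan is to establish this invariant carefully, using the two cases of a single growth operation as set out in Section~\ref{sec:models-preliminaries}, and then to derive both claims from it.

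First I would fix a node $u$ and, at each time step $t$, define $D_t(u)\subseteq\{N,E,S,W\}$ as the set of directions in which $u$ has an incident edge at the end of step $t$. I would then show $D_t(u)\subseteq D_{t+1}(u)$. Any change to $u$'s incident edges during step $t+1$ is caused by a growth operation involving $u$. If $u$ grows in direction $d$ under case~(i) of the operation definition, then a new node $u'$ is placed at the cell in direction $d$ from $u$ and the edge $uu'$ is added; so $d$ is added to $D_t(u)$ and no direction is lost. Under case~(ii), where the cell in direction $d$ is already occupied by a neighbor $v$ of $u$, the edge $uv$ is replaced by $\{uu',u'v\}$; the direction $d$ is retained by $u$ because $u'$ now lies in direction $d$ from $u$ and $uu'\in E$. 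If instead $u$ is not the node being grown but is a neighbor $w$ that gets shifted, then (without neighbor handover) $u$'s incident edges are translated as a rigid set, so $D_t(u)$ is unchanged. This establishes the monotonicity invariant.

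With the invariant in hand, both parts of the observation follow quickly. For the first part, every time $u$ grows into a new direction $d\notin D_{t-1}(u)$, a fresh direction is added to $D_t(u)$; monotonicity ensures it remains in $D_{t'}(u)$ for every later $t'$, and in particular in the final shape $P$. Hence the number of distinct growth directions used by $u$ is at most $|D_{t_f}(u)|=\deg_P(u)$. For the second part, suppose that at some time step $u$ already has all $\deg_P(u)$ neighbors that it will have in $P$, and consider a subsequent growth of $u$ in a direction $d$. If $d\notin D_t(u)$, then by the first part $|D_{t_f}(u)|\ge \deg_P(u)+1$, contradicting that $u$ has degree $\deg_P(u)$ in the final path; so $d$ must correspond to a direction of an already existing incident edge of $u$ in $P$.

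The only delicate point, and the step where I would write things out in most detail, is case~(ii) of the growth operation: one must observe that although the particular edge $uv$ ceases to exist, its replacement by $uu'$ keeps $u$ incident to some edge in the same cardinal direction $d$, so that $D_t(u)$ truly does not lose $d$. Once this subtlety is handled, the rest of the argument is a direct consequence of the fact that edges are never globally deleted in the connectivity model and that the final shape is the path $P$.
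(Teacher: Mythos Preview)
Your argument is correct and follows the same line as the paper: the paper simply recalls that in the connectivity model an edge, once generated, cannot be deleted, and declares that the observation follows immediately from this. Your monotonicity invariant for $D_t(u)$ is a careful formalization of precisely that one-line justification, so the approaches coincide.
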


As there exists a unique subpath between any two vertices in a path, this fact, together with the above observation gives the following observation.

\begin{observation}\label{obs:noTurn}
    Let $x$ and $z$ be any two vertices of $P$ such that there exists a line segment between them in the path constructed so far by $\sigma$. Then, all the vertices on the subpath between $x$ and $z$ in $P$ will lie on a line segment in the final path constructed by $\sigma$. 
\end{observation}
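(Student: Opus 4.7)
The plan is to show that once a line segment from $x$ to $z$ exists in the constructed path at some time $t^*$, this collinear structure persists through all subsequent growth steps and any new vertex inserted between $x$ and $z$ in the final path is also placed on the same line. Since in the target path $P$ there is a unique subpath between $x$ and $z$, this persisting line segment must coincide with that subpath in the final configuration.

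First, I would describe the situation at time $t^*$: the constructed line segment between $x$ and $z$ consists of a collinear sequence $x = w_0, w_1, \ldots, w_m = z$ with edges $w_i w_{i+1}$ for $0 \leq i < m$. Every interior vertex $w_i$ (for $0 < i < m$) already has both of its neighbors $w_{i-1}$ and $w_{i+1}$ in the constructed path, and since $P$ has maximum degree two, each such $w_i$ has reached its final degree. By Observation \ref{obs:growAlong}, any growth operation originating at $w_i$ must be along one of its two incident edges, so any newly generated node from $w_i$ is inserted along the same line.

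Next, I would argue by induction on time $t \geq t^*$ that every new vertex appearing on the final subpath from $x$ to $z$ in $P$ is generated by a growth operation applied to a vertex already lying on the currently constructed line segment, and is therefore itself placed on the line. The key structural fact is that a newly generated node $u'$ is always adjacent to its generator $u$, so that $uu'$ is an edge of $P$; if $u'$ lies strictly between $x$ and $z$ in the subpath of $P$, then its two $P$-neighbors are both on that subpath, and in particular $u$ is on the subpath. Taking the earliest such $u'$ and noting that $u$ was already in the constructed path when it grew, $u$ must lie on the line segment at that moment, and applying the previous paragraph (or the easy case-(ii) argument when $u \in \{x, z\}$) places $u'$ on the line as well.

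The main obstacle I anticipate is the careful handling of growth originating at the endpoints $x$ and $z$, which may have degree one in the constructed path and could, in principle, grow in a direction perpendicular to the line segment. I would argue that such a perpendicular growth creates a new node on the \emph{opposite} side of the line from the subpath's interior; since $P$ is a path, this new node lies on the subpath of $P$ from $x$ (respectively $z$) to the \emph{other} endpoint of $P$, not on the subpath toward $z$ (respectively $x$). Growth via case~(ii) at $x$ or $z$, on the other hand, inserts the new node between $x$ and $w_1$ (or between $w_{m-1}$ and $z$), which is on the line. Combining these observations closes the induction and shows that every vertex of the final subpath from $x$ to $z$ in $P$ lies on the line segment.
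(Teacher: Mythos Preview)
Your proposal is correct and follows essentially the same reasoning the paper relies on: the paper states this as an observation, justified in a single sentence by combining Observation~\ref{obs:growAlong} with the uniqueness of the subpath between $x$ and $z$ in $P$, and your argument is a careful unpacking of exactly these two ingredients. Your explicit treatment of the endpoint case (case~(i) growth at $x$ or $z$ producing a node on the other side of the endpoint in the path order, case~(ii) growth inserting a collinear node) is a useful elaboration that the paper leaves implicit; the slightly awkward phrase ``opposite side of the line'' should be replaced by what you say next, namely that the new node lies on the subpath toward the other endpoint of $P$.
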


We now give the following lemma concerning the order in which the turning points of $P$ are generated by $\sigma$. 
\begin{lemma}\label{lem:inOrderTP}
    Let $P$ be a path between $u$ and $v$ with $k$ turning points. Let $(tp_1, tp_2, $ $\dots, tp_k)$ be the order of turning points of $P$ from $u$ to $v$. Let $\sigma$ be any process that grows $P$ from a single node starting from the turning point $tp_i$. Then, the sets $\{tp_{i+1}, tp_{i+2}, \dots, tp_k\}$ and $\{tp_1, tp_2, \dots, tp_{i-1}\}$ of turning points are generated in the order $(tp_{i+1}, tp_{i+2}, \dots, tp_k )$ and $(tp_{i-1}, tp_{i-2}, \dots, tp_1 )$, respectively by $\sigma$. Moreover, $\sigma$ respects the direction of $P$ at every node while generating the next node from it.
\end{lemma}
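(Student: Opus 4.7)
The plan is to reduce the ordering claim to the already-established Lemma \ref{lemma:graph-equality} via a tree-rooting argument, and to handle the direction claim separately using Observation \ref{obs:growAlong}.

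First I would view the path $P$ as a tree rooted at the starting turning point $tp_i$. Rooting at $tp_i$ produces at most two root-to-leaf paths of $P$: one that traverses the turning points in the order $tp_i, tp_{i+1}, \ldots, tp_k$ and, if $i > 1$, a second one traversing $tp_i, tp_{i-1}, \ldots, tp_1$. Restricted to turning points, these are exactly the sequences whose generation order is asserted by the lemma. Applying Lemma \ref{lemma:graph-equality} to each root-to-leaf path, I obtain the $\mapsto$-chains $tp_i \mapsto tp_{i+1} \mapsto \cdots \mapsto tp_k$ and $tp_i \mapsto tp_{i-1} \mapsto \cdots \mapsto tp_1$. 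By the same argument as in the proof of Theorem \ref{the:lower-bound-trees}, these $\mapsto$-chains force strictly increasing generation times along them, since a turning point cannot generate another (directly or through its line segment) before it is itself generated. Because $u \stackrel{t}\mapsto v$ records the time step at which $v$ is generated, the strict-ordering of the time stamps is exactly the claimed order of generation in both directions.

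For the direction claim, I would fix any node $x$ of $P$ and any growth operation of $\sigma$ that generates a new node $x'$ from $x$. Because the connectivity model never deletes edges and the final shape is exactly $P$, the edge $xx'$ must persist throughout the process, so the line segment leaving $x$ in the direction of $x'$ must be present in $P$. Hence the direction of this growth operation is one of the at most $\deg_P(x) \le 2$ directions of $P$ at $x$; this is precisely Observation \ref{obs:growAlong} specialized to paths. Once $x$ has reached degree $\deg_P(x)$ in the constructed shape, any further growth at $x$ is an insertion along one of the two incident line segments (by the second half of Observation \ref{obs:growAlong}), so the direction continues to match $P$ at $x$.

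The main obstacle I anticipate is interpreting ``direction of $P$ at $x$'' correctly in the presence of split-and-insert operations, since $x'$ need not be the ultimate $P$-neighbor of $x$ (intermediate nodes could later be inserted between $x$ and $x'$). The invariant that actually matters is the direction of the line segment emanating from $x$ towards $x'$; since no edge is ever deleted in the connectivity model, that direction is locked in from the moment it is created and must coincide with a direction of $P$ at $x$. This is specific to the connectivity graph model and is precisely what makes the direction claim go through here; in a cycle-breaking or neighbor-handover variant the guarantee would need to be re-examined.
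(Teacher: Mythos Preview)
Your proposal is correct, and the route you take for the ordering claim is genuinely different from the paper's. The paper argues the ordering directly from Observations~\ref{obs:growAlong} and~\ref{obs:noTurn}: assuming some $tp_{j+1}$ is the first turning point skipped, a later $tp_m$ with $m>j+1$ must be joined to $tp_j$ by a straight segment in the partially built path, and then Observation~\ref{obs:noTurn} forces every node between them---in particular $tp_{j+1}$---to lie on a line segment forever, contradicting $tp_{j+1}$ being a turning point. You instead view $P$ as a tree rooted at $tp_i$ and invoke Lemma~\ref{lemma:graph-equality} as a black box to get the $\mapsto$-chains, from which strictly increasing generation times follow exactly as in Theorem~\ref{the:lower-bound-trees}. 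Your reduction is more modular, reusing the tree machinery already in place; the paper's argument is more self-contained and avoids the $\mapsto$ relation entirely. For the direction claim, the two arguments are essentially the same idea (permanence of edge directions in the connectivity model combined with the degree bound from Observation~\ref{obs:growAlong}); your first phrasing that ``the edge $xx'$ must persist'' is not literally correct since later insertions may subdivide it, but your final paragraph already supplies the right invariant---that the \emph{direction} out of $x$ is locked in---so no gap remains.
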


\begin{proof}
Recall that, an edge, once generated, cannot be deleted in this model. This in turn means that a node can grow in at most its degree many different directions. Moreover, once a node has degree many neighbors, it can only grow along an incident edge.

We first prove that $\sigma$ respects the direction of $P$ at every node while generating the next node from it. As the direction makes sense only when the node already has a neighbor, we prove the statement for the nodes that grow at time step $2$ or later. Let $\sigma$ grow the node $v$ at time step $t \geq 2$. Assume for contradiction, $\sigma$ does not respect the direction of $P$ at $v$ while generating the next node $u$. Then, once $u$ is generated, the degree of $v$ is $2$ in the path constructed by $\sigma$ so far. By Observation~\ref{obs:growAlong}, we get that $\sigma$ can never generate a neighbor of $v$ in the desired direction, a contradiction. Thus, $\sigma$ always respects the direction of $P$ at every node while generating the next node from it.

We now prove the property regarding the order of generation of turning points. We prove that the set $\{tp_{i+1}, tp_{i+2}, \dots, tp_k\}$ is generated in the order $(tp_{i+1}, tp_{i+2}, \dots, tp_k)$. The proof for the set $\{tp_1, tp_2, \dots, tp_{i-1}\}$ is similar. Let $tp_{j+1}$ be the first turning point that was not generated in the desired order, for $j \geq i$. Moreover, let $tp_k$ be the turning point that was generated after $tp_j$, for $k > j$. This implies that there exists a subpath $P'$ from $tp_j$ to $tp_k$ of the path constructed so far by $\sigma$ which does not contain any other turning points, i.e., $P'$ is drawn as a line segment. As $tp_{j+1}$ lies between $tp_j$ and $tp_k$ in $P$, by Observation~\ref{obs:noTurn}, we get that $\sigma$ can never generate the two neighbors of $tp_{j+1}$ in different directions. This contradicts the fact that $tp_{j+1}$ is a turning point of $P$. Thus, we conclude that the set $\{tp_{i+1}, tp_{i+2}, \dots, tp_k\}$ is generated in the order $(tp_{i+1}, tp_{i+2}, \dots, tp_k )$.
\end{proof}

Let $P$ be an incompressible spiral path between $u$ and $v$ with $k$ turning points. 
Moreover, let $u$ be the internal endpoint of $P$. We now give the following lemma about the lower bound on the number of time steps taken by any process that grows $P$ from a single node starting from $u$.

\begin{figure}[t]
     \centering
     \begin{subfigure}[t]{.33\textwidth}
         \centering
        \includegraphics[page=1]{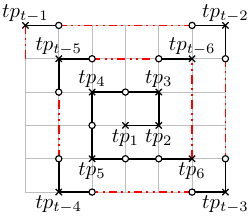}
         \caption{}
        \label{fig:incompressiblePathA}
    \end{subfigure}\hfil
     \begin{subfigure}[t]{.33\textwidth}
        \centering
         \includegraphics[page=2]{Figures/geoGrowth}
        \caption{}
         \label{fig:incompressiblePathB}
     \end{subfigure}\hfil
     \begin{subfigure}[t]{.33\textwidth}
         \centering
          \includegraphics[page=3]{Figures/geoGrowth}
          \caption{}
         \label{fig:incompressiblePathC}
     \end{subfigure}
    
     \caption{(a) An illustration of turning points (drawn by cross points) of an incompressible spiral path $P$. (b) An illustration of the path $\widehat{P}$ constructed by $\sigma$ in the Base Case in the proof of Lemma~\ref{lem:lowerBound2}. The black (blue) part of $\widehat{P}$ shows the subpath, which is the same as (different from) $P$. The dotted green line shows the direction in which $tp_5$ will be generated. (c) An illustration of the path $\widehat{P}$ constructed by $\sigma$ in the Inductive step in the proof of Lemma~\ref{lem:lowerBound2} for $t > 6$. The black (blue) part of $\widehat{P}$ shows the subpath, which is the same as (different from) $P$. The dotted green line shows the direction in which $tp_t$ will be generated.}
     \label{fig:incompressiblePath}
 \end{figure}

\begin{lemma}\label{lem:lowerBound1}
    Let $P$ be an incompressible spiral path between $u$ and $v$ with $k$ turning points. Moreover, let $u$ be the internal endpoint of $P$. Let $\sigma$ be any process that grows $P$ from a single node starting from $u$. Then, $\sigma$ requires $\Omega(k\log k)$ time steps.
\end{lemma}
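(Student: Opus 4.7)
The plan is to exploit Lemma~\ref{lem:inOrderTP} to force a sequential order on the generation of turning points, and then to charge, to each inter-arrival gap, the time needed to grow the corresponding line segment of the spiral from scratch. Since $\sigma$ starts from $u = tp_1$ (the internal endpoint), Lemma~\ref{lem:inOrderTP} forces it to generate the turning points in the order $tp_1, tp_2, \ldots, tp_k$. Let $t_i$ denote the time step at which $tp_i$ is generated, with $t_1 = 0$; then $0 = t_1 < t_2 < \cdots < t_k$, and the total number of time steps used by $\sigma$ is at least $t_k$.

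The main step is then to establish, for each $i \in \{2, \ldots, k\}$, the inter-arrival bound $t_i - t_{i-1} \geq \lceil \log(l(s_{i-1}) + 1) \rceil$. To see this, I would focus on the set $C_i$ of nodes of $P$ that lie on segment $s_{i-1}$ in the final shape (including the endpoints $tp_{i-1}$ and $tp_i$). At time $t_{i-1}$ only $tp_{i-1}$ from $C_i$ has been generated, while at time $t_i$ all $l(s_{i-1}) + 1$ nodes of $C_i$ have been generated. By Lemma~\ref{lem:inOrderTP} no turning point other than $tp_i$ is generated during $(t_{i-1}, t_i]$ and $\sigma$ respects the direction of $P$ at every node it grows from; together with Observation~\ref{obs:growAlong} and the fact that $P$ is a path, this means that every new node produced during this interval by a node currently in $C_i$ must itself lie on the line of $s_{i-1}$, and no node from outside $C_i$ can be adopted into it. Since each node produces at most one offspring per time step, $|C_i|$ can at most double per time step, yielding the claimed bound.

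Summing telescopically and invoking Proposition~\ref{pro:spiral-inequalities}, which for an incompressible spiral gives $l(s_1) = l(s_2) = 1$ and $l(s_j) = l(s_{j-2}) + 1$ so that $l(s_j) = \lceil j/2 \rceil$, the conclusion is
\[
t_k \;\geq\; \sum_{j=1}^{k-1} \lceil \log(l(s_j) + 1) \rceil \;\geq\; 2 \sum_{m=1}^{\lfloor (k-1)/2 \rfloor} \log(m+1) \;=\; \Omega(k \log k),
\]
using the standard estimate $\sum_{m=1}^{N} \log m = \Theta(N \log N)$. This yields the desired $\Omega(k \log k)$ lower bound.

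The hard part will be nailing down the ``at most doubling'' argument in full generality. Specifically, two shortcut scenarios must be ruled out: $tp_i$ being generated by a node that is not in the chain currently grown from $tp_{i-1}$ (for example by a node pushed into a position adjacent to $tp_i$'s initial position by operations on earlier segments); and nodes of $C_i$ being augmented by nodes generated elsewhere in the shape and subsequently merged into $C_i$. The connectivity model prevents spurious edge creation and Lemma~\ref{lem:inOrderTP} pins down the order and direction of growth, so in effect $C_i$ grows in isolation from the rest of the shape during $(t_{i-1}, t_i]$; turning this isolation into a rigorous case analysis covering all intermediate configurations, including those in which $C_i$ has been translated by operations on earlier parts of the spiral, is the most delicate step.
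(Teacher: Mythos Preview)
Your argument has a genuine gap, and it is not where you think it is. The unjustified step is the assertion that ``at time $t_i$ all $l(s_{i-1})+1$ nodes of $C_i$ have been generated.'' Nothing you cite forces this. In the connectivity model a growth operation can be applied on any edge of the current path, so after $tp_i$ appears the process is perfectly free to keep inserting new nodes into the segment between $tp_{i-1}$ and $tp_i$; in a general path, that segment could have length as small as $1$ at the moment $tp_i$ is born and only later be stretched to $l(s_{i-1})$. If that happens, your doubling count on $C_i$ over the window $(t_{i-1},t_i]$ collapses, because most of $C_i$ does not yet exist at $t_i$.

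What makes the incompressible spiral special, and what your proposal is missing, is a geometric collision argument: if \emph{any} earlier segment is strictly shorter than its final length at the instant $tp_j$ is created, then the position at which $tp_j$ must appear (Lemma~\ref{lem:inOrderTP} fixes the direction) lies on top of an already-constructed portion of the spiral, causing a node collision. The paper isolates exactly this as Lemma~\ref{lem:lowerBound2}, proving by induction that $\widehat{P}(GT_j-1)=P[tp_1,tp_{j-1}]$, i.e., the entire prefix is at full length before the next turning point can be generated. Once that is in hand, the $\log$ charge per segment and the summation you wrote go through; but the ``hard part'' is this full-length-before-next-turn claim, not the doubling bound or the shortcut scenarios you flag at the end.
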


Let $(tp_1 = u, tp_2, $ $\dots, tp_k = v )$ be the order of turning points of $P$ from $u$ to $v$. By Lemma~\ref{lem:inOrderTP}, we know that $\sigma$ generates the turning points in the order $( tp_1 = u, tp_2, $ $\dots, tp_k = v )$. Let $GT_j$ be the time step when the turning point $tp_j$ was generated by $\sigma$, for any $j \geq 2$. Let $\widehat{P}(t)$ be the path constructed by $\sigma$ after time step $t$. Further, let $a$ and $b$ be two vertices of $P$. We denote by $P[a,b]$ the path between $a$ and $b$ (including both $a$ and $b$) of $P$. Moreover, we denote by $|a-b|_P$ the number of edges in $P[a,b]$. Also, we denote by $X(a,P)$ the x-coordinate of the vertex $a$ in $P$. To prove the above lemma, we first prove the following lemma about the path constructed by $\sigma$.

\begin{lemma}\label{lem:lowerBound2}
    For any $j \geq 5$, the path $\widehat{P}(GT_j-1)$ grown by $\sigma$ till time step $GT_j - 1$ should be the same as the subpath $P[tp_1, tp_{j-1}]$ of $P$ between $tp_1=u$ and $tp_{j-1}$. 
\end{lemma}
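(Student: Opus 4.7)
The plan is to prove this by strong induction on $j \geq 5$, with the inductive step carried out by contradiction. The two essential ingredients are (i) Lemma~\ref{lem:inOrderTP}, which forces $\sigma$ to generate the turning points in the order $tp_1, tp_2, \ldots, tp_k$ and to respect the direction of $P$ at every node, and (ii) Proposition~\ref{pro:spiral-inequalities}, which asserts that in the incompressible spiral $P$ each segment satisfies $l(s_i) = l(s_{i-2}) + 1$, so that consecutive windings of $P$ are separated by exactly one grid cell and there is no geometric slack in which the construction could deviate. Observations~\ref{obs:growAlong} and~\ref{obs:noTurn} then restrict how $\sigma$ can extend the current path.

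For the base case $j = 5$, Lemma~\ref{lem:inOrderTP} ensures that $tp_1, tp_2, tp_3, tp_4$ have been generated by time $GT_5 - 1$ in their correct positions and with the correct local directions. Assume for contradiction that $\widehat{P}(GT_5 - 1) \neq P[tp_1, tp_4]$; then the path must contain at least one node not in $P[tp_1, tp_4]$. By Observation~\ref{obs:growAlong} the extra nodes can only extend the path from its current endpoint, and by Observation~\ref{obs:noTurn} combined with Lemma~\ref{lem:inOrderTP} they must lie along the direction of segment $s_4$. I would then split on the length of this extension: if it reaches or passes the intended position of $tp_5$, then at time $GT_5$ either the node at $tp_5$'s position already sits in the interior of a straight segment (so by Observation~\ref{obs:noTurn} it can no longer be made into a turning point of $P$), or turning at $tp_5$ introduces a third neighbor, contradicting the fact that $P$ is a simple path; if the extension is shorter than $l(s_4)$, I would use the fact that $l(s_5) = l(s_3) + 1$ to show that the subsequent turn at $tp_5$ and the direction of $s_5$ forces a collision with the inner winding already built.

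For the inductive step, assume $\widehat{P}(GT_{j-1} - 1) = P[tp_1, tp_{j-2}]$. Between time steps $GT_{j-1}$ and $GT_j - 1$ the process must grow segment $s_{j-2}$ from $tp_{j-2}$ to $tp_{j-1}$, and this growth can only proceed from the endpoint of the current path (by Observation~\ref{obs:growAlong}). Assume for contradiction that $\widehat{P}(GT_j - 1) \neq P[tp_1, tp_{j-1}]$. The deviation must again be an extra extension along the direction of $s_{j-1}$ from $tp_{j-1}$. Applying the same analysis as in the base case, with $tp_j$ playing the role of $tp_5$ and the incompressibility relation $l(s_{j-1}) = l(s_{j-3}) + 1$ fixing $tp_j$'s position relative to the already-built $tp_{j-3}$-winding, we obtain the same contradictions: any overshoot either creates a branching node at $tp_j$ (violating that $P$ is a path) or makes a turn at $tp_j$ geometrically impossible (by Observation~\ref{obs:noTurn}), while any shorter extension followed by the inductively forced direction of $s_j$ collides with the previously constructed portion of the spiral.

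The main obstacle will be the case analysis: I expect the hardest part to be verifying that \emph{every} form of deviation, including short extensions that do not yet reach $tp_j$'s position, is ruled out by a collision with the previous winding, rather than merely the obvious overshoots. This is where the incompressibility hypothesis really bites; without $l(s_i) = l(s_{i-2}) + 1$, an overshoot could be absorbed by extra segment length. The configurations in Figure~\ref{fig:incompressiblePath} illustrate the typical contradictory picture on which the case analysis is built.
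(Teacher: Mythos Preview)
There is a real gap. You assert that if $\widehat{P}(GT_j-1)\neq P[tp_1,tp_{j-1}]$ then ``the path must contain at least one node not in $P[tp_1,tp_{j-1}]$'', and you build the whole case analysis around such an \emph{overshoot} along $s_{j-1}$. But Lemma~\ref{lem:inOrderTP} does \emph{not} place $tp_1,\ldots,tp_{j-1}$ at their final grid positions; it only fixes their generation order and the direction of each growth step. Consequently the segment $s_{j-2}$ between $tp_{j-2}$ and $tp_{j-1}$ may still be strictly shorter than $l(s_{j-2})$ at time $GT_j-1$, with no overshoot at all. This \emph{undershoot} is the case that actually yields the contradiction, and it is the only case the paper handles: if $|tp_{j-1}-tp_{j-2}|_{\widehat P(GT_j-1)}<l(s_{j-2})$ then, because the inner winding $P[tp_1,tp_{j-2}]$ is already fully built (inductive hypothesis) and the step that creates $tp_j$ cannot simultaneously lengthen $s_{j-2}$ (parallel operations share one cardinal direction), the node $tp_j$ is generated on top of an already-occupied point of the segment between $tp_{j-5}$ and $tp_{j-6}$ (on $tp_1$ in the base case $j=5$).

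Conversely, the overshoot scenario you spend your effort on is not a contradiction. If $s_1,\ldots,s_{j-2}$ are already at full length and one or two extra nodes of $s_{j-1}$ have been produced, then generating $tp_j$ at the next step is collision-free; your ``shorter extension'' sub-case does not force the collision you claim. (Indeed this shows the lemma is phrased slightly too strongly; what Lemma~\ref{lem:lowerBound1} uses, and what the paper's argument actually establishes, is the containment $P[tp_1,tp_{j-1}]\subseteq \widehat P(GT_j-1)$.) To fix your proof, replace the overshoot discussion by the undershoot argument sketched above.
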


\begin{proof}
     We prove the statement by induction on $j$.\\
    \noindent{\bf Base case $(j=5)$.} Recall that, as $P$ is incompressible, $|tp_2 - tp_1|_P = |tp_3 - tp_2|_P = 1$. Thus $GT_3 = 2$ and $\widehat{P}(GT_3) = P[tp_1, tp_3]$. Assume for contradiction that the lemma is not true for $j=5$. This means that $\widehat{P}(GT_5 - 1)$ is a subpath of $P[tp_1,tp_4]$. By Lemma~\ref{lem:inOrderTP}, we know that $GT_5 > GT_4 > GT_3$. As $\widehat{P}(GT_3) = P[tp_1, tp_3]$, we get that $P[tp_1, tp_3]$ is a subpath of $\widehat{P}(GT_5 - 1)$. Combining this fact with the fact that $\widehat{P}(GT_5 - 1)$ is a subpath of $P[tp_1,tp_4]$, we get that $1 \leq |tp_4 - tp_3|_{\widehat{P}(GT_5 - 1)} < |tp_4 - tp_3|_P = 2$. This implies that $|tp_4 - tp_3|_{\widehat{P}(GT_5-1)} = 1$. This further mean that $X(tp_4, \widehat{P}(GT_5 - 1)) = X(tp_1, \widehat{P}(GT_5 - 1))$. By Lemma~\ref{lem:inOrderTP}, we get that $\sigma$ respects the direction of $P$ at every node. Therefore, when $tp_5$ is generated it will collide with $tp_1$, a contradiction (e.g., see Figure~\ref{fig:incompressiblePathB}). So, the lemma is true for $j = 5$.

    \noindent{\bf Inductive hypothesis.} Suppose that the lemma is true for $j = t - 1 \geq 5$.

    \noindent{\bf Inductive step.} We need to prove that the lemma is true for $j = t \geq 6$. Assume for contradiction that the lemma is not true for $j$. This means that $\widehat{P}(GT_t - 1)$ is a subpath of $P[tp_1,tp_{t-1}]$. By Lemma~\ref{lem:inOrderTP}, we know that $GT_t > GT_{t-1} > GT_{t-2}$. By the inductive hypothesis, we know that $\widehat{P}(GT_{t-1} - 1) = P[tp_1, tp_{t-2}]$. This implies that $P[tp_1, tp_{t-2}]$ is a subpath of $\widehat{P}(GT_t - 1)$. Combining this fact and the fact that $\widehat{P}(GT_t - 1)$ is a subpath of $P[tp_1,tp_{t-1}]$, we get that $1 \leq |tp_{t-1} - tp_{t-2}|_{\widehat{P}(GT_t - 1)} < |tp_{t-1} - tp_{t-2}|_P = \lfloor \frac{t-1}{2} \rfloor$. This further implies that either $t=6$ and $X(tp_5, \widehat{P}(GT_6 - 1)) = X(tp_1, \widehat{P}(GT_6 - 1))$, or $X(tp_{t-5}, \widehat{P}(GT_t - 1)) \leq X(tp_{t-1}, \widehat{P}(GT_t - 1)) < X(tp_{t-6}, \widehat{P}(GT_t - 1))$. By Lemma~\ref{lem:inOrderTP}, we get that $\sigma$ respects the direction of $P$ at every node. Therefore, when $tp_t$ is generated, it will collide with a node on the subpath of $\widehat{P}(GT_{t-1} - 1)$ between $tp_{t-5}$ and $tp_{t-6}$, a contradiction (e.g., see Figure~\ref{fig:incompressiblePathC}). So, the lemma is true for $j=t$.
 \end{proof}

We now give the proof of Lemma~\ref{lem:lowerBound1} using Lemma~\ref{lem:lowerBound2}. 

\begin{proof}[Proof of Lemma~\ref{lem:lowerBound1}]
Let $ST_j$ be the time taken by $\sigma$ to grow the path $P[tp_1, tp_j]$ starting from $tp_1$, for any $j \geq 2$. Then, by Lemma~\ref{lem:lowerBound2}, we get that $GT_j \geq ST_{j-1} + 1$, for any $j \geq 5$. Moreover, by Lemma~\ref{lem:lowerBound2}, we know when $tp_j$ is generated, the subpath from $tp_1$ to $tp_{j-1}$ is already generated by $\sigma$. So, the difference between $\widehat{P}(GT_j)$ and $P[tp_1, tp_j]$ is the length of the subpath between $tp_{j-1}$ and $tp_j$ in both the paths. As we know the subpath between $tp_{j-1}$ and $tp_j$ is a line segment in $P$, we can grow it in $\log(|tp_j - tp_{j-1}|_P)$ time steps. This implies that, $ST_j = GT_j + \log(|tp_j - tp_{j-1}|_P)$. Combining the two equations, we get that $ST_j \geq ST_{j-1} + 1 + \log(|tp_j - tp_{j-1}|_P)$. It is easy to observe that $ST_4 = 4$. Thus, by solving the recursive relation, we get that $ST_j = \Omega(k\log k)$. This proves the lemma.
\end{proof}

We now give the main theorem of this section.
\begin{figure}
    \centering
    \includegraphics[page=4, width=.3\textwidth]{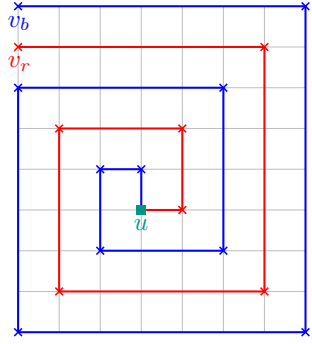}
    \caption{An illustration of an incompressible path consisting of a red and a blue spiral used in the proof of Theorem~\ref{the:lowerBound}. The green square vertex $u$ denotes the internal endpoint of both spirals.}
    \label{fig:lowerBound}
\end{figure}
\begin{theorem}\label{the:lowerBound}
    Let $\sigma$ be a process that grows a path from a single node. Then, there exists a path for which $\sigma$ takes $\Omega(k\log k)$ time steps.
\end{theorem}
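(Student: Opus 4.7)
The plan is to exhibit an explicit worst-case path built from two incompressible spirals sharing a common internal endpoint, as suggested by Figure~\ref{fig:lowerBound}, and reduce the problem to Lemma~\ref{lem:lowerBound1}. Concretely, I would take $P$ to consist of a red incompressible spiral $P_R$ with $\Theta(k)$ turning points and a blue incompressible spiral $P_B$ with $\Theta(k)$ turning points, glued so that their internal endpoints coincide at a single vertex $u$. The resulting path has $\Theta(k)$ turning points in total, so proving an $\Omega(k \log k)$ lower bound for $P$ establishes the theorem.

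Let $\sigma$ be any process that grows $P$ from a single node; by Observation~\ref{obs:growAlong} and the connectivity model, we may assume without loss of generality that $\sigma$ starts from a turning point of $P$. This starting point is either $u$, or it lies on the interior of one of the spirals, or it is an external endpoint. In every one of these cases, at least one of the two spirals, say $P_B$, has its internal endpoint $u$ on the $\sigma$-generation tree strictly below every other turning point of $P_B$; equivalently, by Lemma~\ref{lem:inOrderTP} applied to $P$, the turning points of $P_B$ are produced by $\sigma$ in the order going from $u$ outward toward the external endpoint of $P_B$.

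Next I would argue that the portion of $\sigma$ that constructs $P_B$ behaves, from the moment $u$ is generated onward, exactly as a process starting from $u$ and growing $P_B$ in the sense of Lemma~\ref{lem:lowerBound1}. Since no vertex of $P_B \setminus \{u\}$ can be generated before $u$ exists, and since $\sigma$ still has to respect edge directions (Observation~\ref{obs:growAlong}) and collision-freeness on the incompressible spiral $P_B$, the inductive collision argument of Lemma~\ref{lem:lowerBound2} applies verbatim to $P_B$ with generation times shifted by the step at which $u$ first appears. Therefore, just the construction of $P_B$ already costs $\Omega(k \log k)$ time steps, and the total time of $\sigma$ on $P$ is at least this, proving the bound.

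The main obstacle I anticipate is the case where $\sigma$'s starting node is not $u$: one must rule out that $\sigma$ could somehow parallelize work on $P_B$ with its earlier work on $P_R$ in a way that circumvents the spiral recursion. This is handled precisely by invoking Lemma~\ref{lem:inOrderTP} on the whole path $P$ (not just $P_B$) to force $u$ to be generated before any other vertex of $P_B$, so that the $\Omega(k \log k)$ lower bound of Lemma~\ref{lem:lowerBound1} on the inner spiral can be black-boxed without loss.
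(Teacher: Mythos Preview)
Your proposal is correct and follows essentially the same approach as the paper: construct a path from two incompressible spirals sharing their internal endpoint $u$, use Lemma~\ref{lem:inOrderTP} to force one spiral to be generated outward from $u$ regardless of where $\sigma$ starts, and then invoke the argument of Lemma~\ref{lem:lowerBound1}/Lemma~\ref{lem:lowerBound2} on that spiral. The paper's proof is terser---it simply states ``by a similar proof to that of Lemma~\ref{lem:lowerBound1}''---whereas you spell out the time-shift and the parallelization concern, but the underlying reduction is identical.
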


\begin{proof}
    We prove the theorem by giving a path on which any process that grows it from a single node takes $\Omega(k\log k)$ time steps. We grow an incompressible path $P$ consisting of two spirals as shown in Figure~\ref{fig:lowerBound}. It is easy to observe that, due to Lemma~\ref{lem:inOrderTP}, irrespective of the starting node, $\sigma$ will grow one of the red or blue spirals from its internal endpoint $u$. Then, by a similar proof to that of Lemma~\ref{lem:lowerBound1}, we can prove that $\sigma$ takes $\Omega(k\log k)$ time steps. 
\end{proof}

\section{Adjacency Graph Model}\label{sec:adj-graph-model}

In this section, we study two types of growth: one that avoids collisions by preserving cycles and another that does so by breaking them. 
Due to cycle-preserving growth being a special case of cycle-breaking growth, positive results for the former immediately hold for the latter.

\subsection{Cycle-Preserving Growth}\label{subsec:cycle-preserving-growth}

BFS on line segments cannot be directly applied to cycle-preserving growth in the adjacency model due to the dependence between adjacent line segments. We give a modified BFS that overcomes this by growing adjacent line segments in different phases. 
For each line segment $s$ ---either horizontal or vertical--- of phase $i$ the growth process is as follows: 

\begin{itemize}
 \item Initially, we divide the line segments into sub-segments. Then, for each sub-segment $s_{i,j}$ of a line segment $s_i$, where $j=1,2,\ldots, M_i$ and $M_i$ denotes the maximum number of sub-segments of $s_i$, which is adjacent to a sub-segment $s_{i-1,j}$ grown in a previous phase, we grow $s_{i,j}$ by duplicating $s_{i-1,j}$. This process is done in parallel for all these sub-segments. The remaining sub-segments are then grown in parallel using fast line growth.
\item For any line segment $s_{i}$ that is adjacent to another line segment, also denoted as $s_{i}$, which will be grown in the same phase $i$ in parallel, we use two sub-phases: $i_{h_{even}}$ and $i_{h_{odd}}$ ($i_{v_{even}}$ and $i_{v_{odd}}$ for the vertical sub-phases). In $i_{h_{even}}$, we grow the even-row line segments, followed by the odd-row line segments in $i_{h_{odd}}$. This process is then repeated for the vertical sub-phases. 
\end{itemize}

See Figure~\ref{fig:variant-bfs} for an illustration of this process.

\begin{figure}[ht]
\centering 
\includegraphics[width=2in]{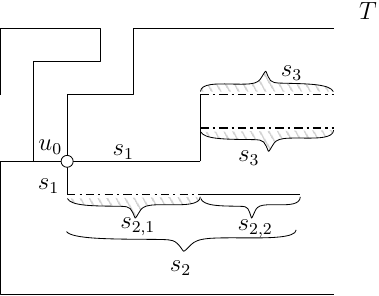}
\caption{An illustration of the modified BFS approach for growing adjacent line segments in the shape without collisions. In this example, the growth of sub-segment $s_{2,1}$ of $s_2$ during phase $i=2$ involves growing its length from the adjacent previously grown line segment $s_1$. Furthermore, when two adjacent line segments are growing in parallel, such as $s_3$, they are separated based on whether they are positioned in an even or odd row of $T$.
}
\label{fig:variant-bfs}
\end{figure}

\begin{theorem}\label{theo:tree-with-k-turns}
    For any shape $S\in \mathcal{C}_k$, $AC(S)$ can be grown in $O(k \log n)$ time steps in the adjacency graph model.  
\end{theorem}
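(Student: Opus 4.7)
The plan is to apply the modified BFS on line segments to a rooted spanning tree $T$ of $S$ that witnesses $S \in \mathcal{C}_k$, and argue both correctness and timing phase by phase. We proceed in at most $O(k)$ phases, where phase $i$ grows, in parallel, every line segment of $T$ at segment-distance $i$ from the root. Since every root-to-leaf path of $T$ has $O(k)$ turning points, the number of phases is $O(k)$. Within each phase we use the sub-phase splitting described before the theorem: each segment $s_i$ is partitioned into sub-segments according to whether a parallel, unit-distance partner has already been grown in phase $i-1$; sub-segments with a partner are filled in a single step by \emph{duplication} of the partner, while the rest are grown by fast line growth in $O(\log n)$ steps. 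When two segments of the current phase are themselves parallel and at unit distance, they are separated into sub-phases by the parity of their row or column.

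For the time bound, each phase consists of a constant number of sub-phases, and each sub-phase takes at most $O(\log n)$ time: duplication is one time step, while fast line growth on an isolated sub-segment of length at most $n$ costs $O(\log n)$. Multiplying by the $O(k)$ phases yields $O(k \log n)$ time steps. At the end of the process, every node of $S$ has been placed at its prescribed grid point, and since the adjacency graph model automatically closes every unit-distance pair, the final shape equals $AC(S)$, as required.

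The main obstacle is verifying that every sub-phase is collision-free in the cycle-preserving sense. Node collisions are ruled out because segments grown in the same sub-phase live on non-adjacent grid lines, so their outward growth cannot push nodes of $T$ into one another, and because each segment's near endpoint is already fixed by earlier phases. Cycle collisions are the subtler issue: whenever two newly placed nodes become adjacent, the closure operator inserts an edge and may close a new cycle, and the cycle-preserving model requires that the displacement vectors along the two paths of every such cycle agree. The partitioning into sub-phases is designed precisely for this. Duplication generates the new sub-segment all at once alongside its already-grown parallel partner, so every unit-square cycle closed by the adjacency closure is formed in a single step with trivially balanced displacements. A sub-segment grown by fast line growth has no parallel partner either from the current or from an earlier phase, so it creates no new cycles while growing. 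The even/odd splitting ensures that at any sub-phase, only one of each pair of parallel same-phase segments is actively growing, so no unbalanced cycle is ever forced. I would formalise this by induction on phases, maintaining the invariant that after phase $i$ the constructed shape equals the adjacency closure of the subtree of $T$ consisting of all segments at segment-distance at most $i$; correctness of the final output and the claimed $O(k\log n)$ bound then follow at once.
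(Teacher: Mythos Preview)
Your proposal is correct and follows essentially the same approach as the paper: apply the modified BFS on line segments to a witnessing spanning tree $T$ of $S$, argue by induction on the phase index that after phase $i$ the shape built so far is the adjacency closure of the depth-$i$ subtree, and bound the time by $O(k)$ phases times $O(\log n)$ per phase. Your treatment of collision-freedom is in fact more explicit than the paper's own proof, which largely asserts that the even/odd sub-phase split and the duplication rule prevent collisions without spelling out the node-collision versus cycle-collision distinction; your observation that duplicated sub-segments close only unit squares with balanced displacements, while fast-line-grown sub-segments have no unit-distance partner and hence close no cycles, is exactly the mechanism the algorithm relies on.
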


\begin{proof}
To prove the statement, we use induction on the number of phases.
For the base case, since $s_1$ is the only line segment at this point, it covers all the paths within distance $i=1$ in $T$, so the statement holds.
For the inductive step, let us assume that after $i$ phases, the modified BFS has grown up to the $i$-th line segment of every path within distance $i$ in $T$, which corresponds to the number of turns $k$.
Then, in the $i+1$ phase, we grow the line segment 
$s_{i+1}$ of each path within distance $i+1$. For each sub-segment $s_{{i+1},j}$ of $s_{i+1}$ that is adjacent to a sub-segment $s_{i,j}$ grown in a previous phase, we can directly grow $s_{{i+1},j}$ by duplicating $s_{i,j}$ in a single time step. 

For a line segment $s_{i+1}$ that is adjacent to another line segment also scheduled to be grown in phase $i+1$, we use two sub-phases to control their growth. Let us assume without loss of generality that it is a horizontal line segment, then we have $i_{h_{even}}$ and $i_{h_{odd}}$. In $i_{h_{even}}$, we grow the even-row line segment, and in $i_{h_{odd}}$, we grow the odd-row line segment. This ensures that adjacent line segments in the same phase are grown separately without collisions.

By the induction hypothesis, after $i$ phases, we have grown up to the $i$-th line segment of every path within a distance $i$, corresponding to the number of turns $k$. Therefore, for a tree $T$ with $k$ turns, it can be grown in $O(k \log n)$ time steps. After that, the adjacency model adds edges between all adjacent nodes of $T$, which builds $AC(T)$.
Since all line segments $s_1, s_2,\ldots, s_{i}$ are constructed using the modified BFS, the structure of the tree is maintained, and no line segments collide with other line segments during the parallel growth.
 \end{proof}

As an example application of Theorem \ref{theo:tree-with-k-turns}, any compact shape whose perimeter has a bounded number of turns can be grown in $O(\log n)$ time steps in the adjacency model.

\subsection{Cycle-Breaking Growth with Neighbor Handover}\label{subsec:cycle-breaking-growth}
This growth process is characterized by its ability to break cycles within a shape to avoid collisions while maintaining global connectivity. 
Our main result in this section is a universal algorithm that efficiently provides an $O(\log n)$ time steps \emph{growth process} for any connected shape $S$. The algorithm achieves this by specifying an elimination order of the nodes within the shape and then inverting this order to produce the \emph{growth process}. It is important to note that this algorithm utilizes the \emph{neighbor handover} property, which is crucial, particularly in breaking edges and transforming neighboring nodes. 

Given that a shape $S$ has $C$ columns and $R$ rows, the \emph{elimination} algorithm consists of two sets: vertical and horizontal phases. Without loss of generality, let us assume that we start with the vertical phases:
    \begin{itemize}
        \item In the first vertical phase, count rows starting from the bottom-most row and identify odd and even rows.
        \item For every node $u$ in an odd row that has a neighbor $v$ in an even row connected by the edge $uv$, eliminate $v$ by contracting the edge $uv$ toward $u$. Then, register the eliminated or translated nodes (i.e., if there is no neighbor, a node moves down one row) in a process $\sigma$ to maintain their order.
        \item At the end of this phase, add all edges between nodes. Then proceed to the next vertical phase, recounting rows from the bottom-most row and repeating the previous steps.
    \end{itemize}
After completing the set of vertical phases, we obtain a horizontal line with a length equal to the horizontal dimension of the shape (i.e., the number of columns in $S$). Then, we perform the same steps horizontally, which results in the elimination of the horizontal line through successive halving. Finally, after completing both the vertical and horizontal phases, we reverse the order of eliminated nodes and return the growth process.

\begin{figure}[ht]
\centering 
\includegraphics[width=4in]{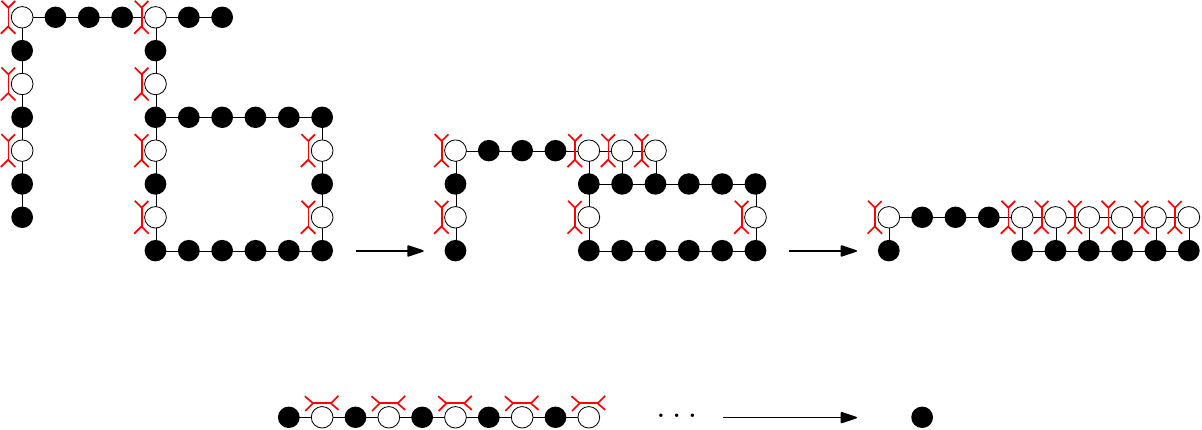}
\caption{A scenario of the Elimination algorithm for constructing a growth process $\sigma$ for a
shape $S$. The red sign indicates the elimination sequence, first vertically and then horizontally.}
\label{fig:elimination-eg}
\end{figure}

\SetAlgoNoLine
\begin{algorithm}[htbp]
\caption{Elimination algorithm }
\label{alg:elimination-algorithm}
\SetKwInOut{KwIn}{Input}
    \SetKwInOut{KwOut}{Output}
    \KwIn{Shape $S=(V,E)$}
    \KwOut{Growth process $\sigma$ for $S$ }
    \tcc{The algorithm consists of two sets of vertical and horizontal phases. A set of vertical phases denoted as $i_v$, and a set of horizontal phases denoted as $i_h$}
        $i_v$, $i_h \leftarrow 1$\\ 
        $\sigma$, $\sigma_v$, $\sigma_{h}=\{\}$ \\
         
        \While{ $R > 1$ }
        {
            ${Q}^v_{i_v} \leftarrow  \emptyset $\\
            \For {each $y=1$ to $R-1$ (i.e.,  iterating through rows from the bottom-most row to the top)}
            {
            
            \If {$y$ is odd }
                {
                
                \For{each $x = 1$ to $C$}
                {
                \If {
                $u_{xy}$, $u_{x(y+1)} \in V$ $\wedge$ 
                $u_{x(y+1)} \in C_x$}
                    {
                        ${Q}^v_{i_v}$ $\leftarrow $ ${Q}^v_{i_v} \cup \{\{(u_{xy}u_{x(y+1)}\},u_{x(y+1)})\}$\\
                        
                        $V \leftarrow V \setminus \{u_{x(y+1)}\}$\\
                        $E \leftarrow E \setminus \{u_{xy}u_{x(y+1)}\}$
                    }
                }
            }
        }
        
        $i_v \leftarrow i_v +1$\\
         
    }
    $\sigma_v \leftarrow \{{Q}^v_{i_v}, {Q}^v_{i_{v-1}},\ldots, {Q}^v_{i_1}\}$\\
    $\sigma \leftarrow \{\sigma_v\}$\\
     \tcc{Assuming we have done the previous vertical phases, we do the same for the columns $C$ to get the growth process $\sigma_h$ for the horizontal phases.}

        \While{ $C > 1$ }
        {
             $Q^h_{i_h} \leftarrow  \emptyset $\\
    
            \For {each $x=1$ to $C-1$ (i.e., iterating through columns from the leftmost column to the right) }
            {
            \If {$x$ is odd }
                {
               
                \If {
              $\exists v \in C_{j+1}$ such that $v\in N(u)$ (i.e., $v$ is a right neighbor of $u$)
                 $u_{x1}$, $u_{(x+1)1} \in V $
               }
                   {
                       $Q^h_{i_h}$ $\leftarrow $ $Q^h_{i_h} \cup \{\{(u_{xy}u_{(x+1)y}\},u_{(x+1)y})\}$\\
                        
                     $V \leftarrow V \setminus \{u_{(x+1)1}\}$\\
                    $E \leftarrow E \setminus \{u_{x1}u_{(x+1)1}\}$\\
               }
                
             }
       }

         $i_h \leftarrow i_h +1$
  }  
    $\sigma_h \leftarrow \{Q^h_{i_h}, Q^h_{i_{h-1}},\ldots, Q^h_{i_1}\}$\\
    
    $\sigma \leftarrow \sigma \cup \{ \sigma_h\}$ 
    
return $\sigma$\\
\end{algorithm}

\begin{theorem}\label{the:cbm-universal}
    Given any connected shape, $S$ with dimensions $l \times w$, Algorithm~\ref{alg:elimination-algorithm} 
    grows $S$ from a single node in $O(\log l + \log w)$ time steps. 
\end{theorem}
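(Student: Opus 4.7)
The plan is to analyze the Elimination algorithm in two stages and then argue that reversing its output yields a valid, collision-free growth process of the claimed length. First I would bound the number of phases produced by Algorithm~\ref{alg:elimination-algorithm}. In each vertical phase, every even-indexed row (counted from the bottom) is entirely emptied: each node $v$ in an even row is either contracted into its neighbor $u$ in the odd row immediately below (registering the operation in $Q^v_{i_v}$) or, if no such neighbor exists, translated down into the odd row below. Consequently the number of occupied rows is at most halved per phase, so after $\lceil \log R \rceil = O(\log l)$ vertical phases a single row remains. The horizontal phases then apply the same halving argument across columns, giving $O(\log w)$ further phases; in total the elimination produces $O(\log l + \log w)$ parallel operation sets.

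Next I would argue correctness of the reverse process. Reversing a contraction of $uv$ toward $u$ corresponds exactly to a growth operation at $u$ that generates a new node at the position of $v$; reversing a translation corresponds to generating a node without a connection to the perpendicular neighbor on that side. The neighbor handover property is precisely what is needed: when the elimination contracted edge $uv$ and any perpendicular neighbor $w$ of $v$ merged with $u$, the reverse step generates $v$ from $u$ and hands $w$ back to $v$. Using Proposition~\ref{prop:models-diff}(3) for cycle-breaking growth, registering edges as they disappear during elimination ensures every edge of $S$ is accounted for by the adjacency graph model after the corresponding growth step. An induction over reversed phases shows that the shape obtained at the end of step $t$ of the growth process is exactly the shape present at the beginning of phase $t$ of the elimination, so the final shape is $S$.

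The step I expect to be the main obstacle is proving that the parallel sets $Q^v_{i_v}$ and $Q^h_{i_h}$, when executed in reverse as simultaneous growth operations, are free of node and cycle collisions. I would prove this by exploiting the structural invariant imposed by the odd/even row (respectively column) partition: within a single phase, no two contraction operations share an endpoint, because each odd-row node $u$ pairs with at most one even-row node $v = u + \langle 0,+1\rangle$ above it, and no two eliminated nodes share a row. In the reversed direction, each growing node therefore generates into a unique, previously unoccupied position in a uniform cardinal direction (north for the reverse of a vertical phase, east for a horizontal one), producing motion vectors that agree along every path; hence no node collision occurs. Cycle collisions are avoided because cycle-breaking growth is permitted, and any cycle of the current shape that would be inconsistent with uniform unit displacement perpendicular to the growth direction has already had an edge removed by the elimination (equivalently, by a deletion scheduled at the start of the reversed time step). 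Combining the phase count of $O(\log l + \log w)$ with collision-freeness and correctness completes the proof.
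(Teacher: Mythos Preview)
Your plan is essentially the paper's own argument: bound the number of elimination phases by the row/column halving, and then argue by induction over the reversed phases that the shape produced after step $t$ of the growth process coincides with the shape present at the start of the corresponding elimination phase, so the final output is $S$. You are in fact more explicit than the paper about collision-freeness of the reversed parallel sets, which the paper's proof leaves largely implicit.

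Two small corrections that do not affect the overall argument. First, the clause ``no two eliminated nodes share a row'' is false: all nodes of a given even row can be eliminated in the same phase. What you actually need---and what your preceding clause already establishes---is that the contraction operations of a phase have pairwise disjoint endpoints, which follows because each odd-row node has at most one even-row neighbor above it and each even-row node has at most one odd-row neighbor below it. Second, Proposition~\ref{prop:models-diff}(3) concerns cycle-\emph{preserving} processes and does not supply what you want here; the guarantee that every edge of $S$ reappears after the relevant growth step comes directly from the adjacency graph model via $S_{t+1}^b = AC(S_t^e)$, together with the edge deletions permitted by cycle-breaking.
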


\begin{proof}
To prove the correctness of Algorithm~\ref{alg:elimination-algorithm} for growing any shape $S$ with $l$ rows and $w$ columns, for simplicity we will use induction focusing on one of the processes, say the horizontal $\sigma_h=\{Q^h_{i_h}, Q^h_{i_{h-1}},\ldots, Q^h_{i_1}\}$. The proof for the vertical phases would follow similarly due to the symmetry in how the algorithm handles rows and columns.
For the base case, since we use the process $\sigma_h=\{Q^h_{i_h}, Q^h_{i_{h-1}},\ldots, Q^h_{i_1}\}$, we consider the last phase of elimination, $i_h$, as the first phase of growth in the process $\sigma_h$. At this phase, the shape $S$ is reduced to its minimal form, consisting of a single row and column. Therefore, the shape at this phase $S_{i_{h}}$ corresponds to the single node $u_0$. The process $Q^h_{i_{h-1}}$ in the subsequent phase, $i_{h-1}$, which contains the nodes that were eliminated in this phase. Reversing this phase involves adding nodes adjacent to $u_0$, specifically, growing along the row and column that $u_0$ belongs to. This results in correctly constructing a sub-shape of $S$ that includes $u_0$ and its adjacent neighbor.
Assuming that up to the horizontal phase $i_{h-1}$, the algorithm has constructed a sub-shape $S_{i_h}$ from a single node which maintains the structure properties of $S$. In the following phase $i_h$, the process $Q^h_{i_h}$ specifies nodes to be added next. By induction, these nodes will be generated by the nodes in the sub-shape $S_{i_{h-1}}$. Therefore, the growth of the sub-shape $S_{i_{h}}$ from $S_{i_{h-1}}$ using the process $Q^h_{i_h}$ is correct and a substructure of $S$. Therefore, each horizontal phase in the reversed process $\sigma_h$ contributes correctly to the overall growth of the shape $S$.

The time complexity for the horizontal phases is $O(\log w)$ time steps due to the halving nature of the elimination process  across the columns. The growth during reconstruction is exponentially fast. This is because nodes that were eliminated in each phase are added back in parallel during the growth process. Similarly, for the vertical phases, the time complexity is $O(\log l)$ due to the halving of $l$ rows.
Therefore, the total number of time steps that combine the complexities of both vertical and horizontal phases is $O(\log l + \log w)$.
\end{proof}

\section{Adjacency vs. Connectivity Models – A Comparative Overview}\label{sec:comparative-overview}

\begin{definition}[Sparse Tree]\label{def:shapes-can-be-constructed-in-both-models}

    We define a tree $T$ as sparse if the adjacency closure $AC$ of $T$ equals the original tree $T$.
\end{definition}

\begin{observation}\label{obs:structures-constructed-in-both-models}
    By Definition~\ref{def:shapes-can-be-constructed-in-both-models}, if $T$ is a sparse tree $T=AC(T)$, then both models adjacency and connectivity can grow $T$ in $O(k \log n)$ time steps, see the intersecting area of Figure~\ref{fig:agm-vs-cgm}. Additionally, the adjacency graph model can be constructed by applying the algorithm of Theorem~\ref{theo:shape-spanning-tree}. 
\end{observation}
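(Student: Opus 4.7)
The plan is to reduce the statement to two direct applications of the upper-bound theorems already established in the paper, using the definition of sparseness to glue them together. First, I would observe that, since $T$ is itself a tree, it lies in $\mathcal{C}_k$ in a trivial way: $T$ is its own rooted spanning tree, and by hypothesis every root-to-leaf path of $T$ has $O(k)$ turning points. Thus both Theorem~\ref{theo:shape-spanning-tree} (for the connectivity graph model) and Theorem~\ref{theo:tree-with-k-turns} (for the adjacency graph model) are applicable with $T$ playing the role of the input tree/shape.

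Next, I would handle the connectivity graph model. Here I would simply invoke Theorem~\ref{theo:shape-spanning-tree}: the BFS-on-line-segments procedure, starting from a single node placed at the chosen root of $T$, produces a collision-free growth process that constructs $T$ in $O(k\log n)$ time steps. No appeal to sparseness is needed in this direction, because in the connectivity graph model edges are created only at generation time, so the output is exactly the tree prescribed by the algorithm.

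For the adjacency graph model, sparseness is the key ingredient. I would apply Theorem~\ref{theo:tree-with-k-turns} with $S:=T$. The theorem produces, in $O(k\log n)$ time steps, the adjacency closure $AC(T)$. By hypothesis $T=AC(T)$, hence the grown shape coincides with $T$. In particular, the implicit edges added by the adjacency graph model at the end of each time step cannot introduce any edge outside $T$, because any two nodes of $T$ that are adjacent on the grid are already connected by an edge of $T$.

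The main thing to double-check would be that the modified BFS used in Theorem~\ref{theo:tree-with-k-turns} does not create, during intermediate time steps, unwanted adjacencies between nodes that are not adjacent in $T$; this is already guaranteed by the correctness argument of Theorem~\ref{theo:tree-with-k-turns}, which grows line segments within empty regions of the grid corresponding to their final positions in $T$. So the only real content of the observation is recognising that the two general theorems specialise cleanly to sparse trees, and that in the adjacency case sparseness is precisely what identifies the grown shape $AC(T)$ with the target $T$. The remaining $O(k\log n)$ bound then follows immediately from the stated running times of the two algorithms.
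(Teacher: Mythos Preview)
Your argument is correct, but it diverges from the paper's own justification in one small way. For the adjacency graph model you appeal to Theorem~\ref{theo:tree-with-k-turns} (the modified BFS), obtaining $AC(T)$ and then using sparseness to conclude $AC(T)=T$. The paper instead points directly to the algorithm of Theorem~\ref{theo:shape-spanning-tree}: because $T=AC(T)$, the adjacency-closure step performed by the model at the end of every time step is a no-op on any intermediate sub-shape of $T$, so the \emph{unmodified} BFS on line segments already runs identically in both models and outputs $T$. Your route is perfectly valid and arguably cleaner to cite, since Theorem~\ref{theo:tree-with-k-turns} is already stated for the adjacency model; the paper's route has the virtue of showing that the extra machinery of the modified BFS (sub-phases, duplication of adjacent segments) is unnecessary for sparse trees, which is really the point of the observation. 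Either way the $O(k\log n)$ bound follows.
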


\begin{observation}\label{obs:structures--cannot-be-constructed-in-both-models}
    If a tree $T$ is not a sparse tree, meaning $T \neq AC(T)$, it implies that some nodes in $T$ are closely connected through non-tree edges. In such case, the tree $T$ falls into three scenarios as shown in Figure~\ref{fig:agm-vs-cgm}:
    \begin{enumerate}
        \item All close nodes lack edges between them. In this case, the connectivity model can grow it, but the adjacency model cannot (see CGM of Figure~\ref{fig:agm-vs-cgm}).
        \item All close nodes have edges. In this case, the adjacency model can grow it, but the connectivity model cannot (see AGM of Figure~\ref{fig:agm-vs-cgm}).
        \item Some nodes have edges while others do not. In this case, neither the connectivity nor the adjacency model can grow it.  An example of this class of shapes is the NICE shapes.
    \end{enumerate}
\end{observation}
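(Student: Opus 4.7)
My plan is to treat each of the three scenarios separately, interpreting ``close nodes'' as pairs of grid-adjacent vertices $(u,v)$ with $uv\notin E(T)$. Under this reading the three scenarios correspond to target shapes $S$ with $S=T$, $S=AC(T)$, and $T\subsetneq S\subsetneq AC(T)$, respectively, where $T$ is the given non-sparse spanning tree. In each scenario one must argue one positive direction (some model does grow $S$) and one negative direction (the other model cannot).

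For scenario~(1) the positive direction is Theorem~\ref{theo:shape-spanning-tree}: since $T$ is a tree with $O(k)$ turns on every root-to-leaf path, BFS on line segments grows it in CGM. For the impossibility in AGM I will use the invariant $S_{t+1}^b = AC(S_t^e)$ from Definition~\ref{def:models}: because $T$ is not sparse there exists a pair $u,v$ with $uv\notin E(T)$ but $u,v$ grid-adjacent; as soon as both nodes coexist in the current shape the edge $uv$ is forced in, so the terminal configuration cannot equal $T$. For scenario~(2) the positive direction is Theorem~\ref{theo:tree-with-k-turns}, and the impossibility in CGM follows from Proposition~\ref{prop:models-diff}(2), since any non-tree edge of $AC(T)$ together with the unique path between its endpoints in $T$ produces a cycle in $AC(T)$, while CGM starting from a single node grows only trees.

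For scenario~(3) I would combine the two obstructions: $S$ contains a non-tree edge and hence a cycle, so it is not a tree and CGM is ruled out by Proposition~\ref{prop:models-diff}(2); at the same time $S$ is missing some grid-adjacent edge, so $S\neq AC(S)$, and the AGM invariant forces the terminal shape to equal its adjacency closure, ruling out AGM. The step I expect to be most delicate is the AGM impossibility in scenarios~(1) and~(3): one has to verify that the instant both endpoints of a forbidden adjacency coexist the edge is irrecoverably inserted, including the corner case where both endpoints are generated in the final step. I would handle this by noting that the end-of-step closure is applied after the final parallel operation as well, so the undesired edge survives into the output shape. Once this small point is settled, the remaining work reduces to direct citations of results from Sections~\ref{sec:connectivity-graph-model} and~\ref{sec:adj-graph-model}.
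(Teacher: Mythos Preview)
The paper does not supply a proof for this observation at all: it is stated with a pointer to Figure~\ref{fig:agm-vs-cgm} and is immediately followed by the unproved Propositions~\ref{prop:cgm} and~\ref{prop:agm}, which encapsulate exactly the two characterizations you are using (CGM grows $S$ iff $S$ is a tree; AGM grows $S$ iff $AC(S)=S$). So there is nothing in the paper to compare against beyond those two one-line propositions; your proposal is strictly more detailed than what the authors wrote, and your reading of the three scenarios as $S=T$, $S=AC(T)$, and $T\subsetneq S\subsetneq AC(T)$ matches the intent of the figure.

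Your argument is correct. Two minor remarks. First, for the positive directions you invoke Theorems~\ref{theo:shape-spanning-tree} and~\ref{theo:tree-with-k-turns}, which carry the extra $O(k)$-turns hypothesis; this is harmless because every finite tree lies in some $\mathcal{C}_k$, but since the observation is about bare constructibility rather than time bounds, you could equally well appeal to the trivial sequential growth for CGM and to Proposition~\ref{prop:agm} (or Theorem~\ref{the:cbm-universal}) for AGM. Second, the ``delicate'' point you flag about the final-step closure in AGM is indeed a small ambiguity in Definition~\ref{def:models} (the paper never says explicitly whether the output is $S_{t_f}^e$ or $AC(S_{t_f}^e)$), but Proposition~\ref{prop:agm} makes clear the authors intend the latter, so your handling of it is consistent with the paper.
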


\begin{figure}[ht]
\centering 
\includegraphics[width=2.7in]{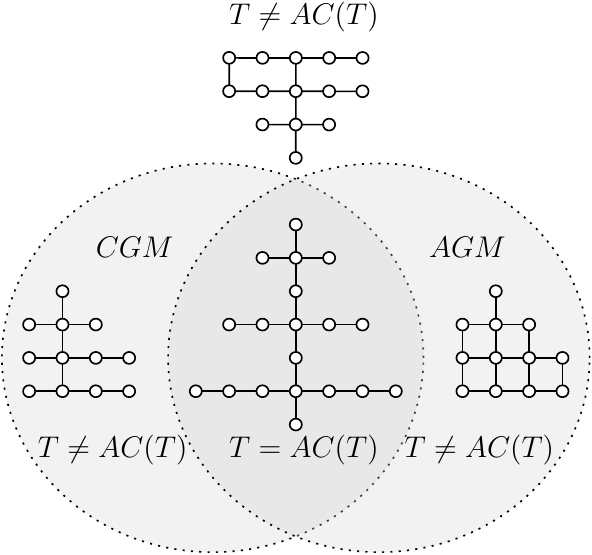}
\caption{An illustration of Observation~\ref{obs:structures-constructed-in-both-models}, where $T= AC(T)$ a non-empty intersection between the connectivity and adjacency graph models, and the three cases of $T \neq AC(T)$, Observation~\ref{obs:structures--cannot-be-constructed-in-both-models}.}
\label{fig:agm-vs-cgm}
\end{figure}

\begin{proposition}\label{prop:cgm}
    The connectivity graph model can grow $S$ iff $S$ is a tree. Also, if $S \in \mathcal{C}_k $ which means there exists a spanning tree $T$ of $S$ that has the $k$ turns property, then, the connectivity graph model can grow it in $O(k \log n)$ time steps.
\end{proposition}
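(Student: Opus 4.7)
The plan is to handle the two parts of the proposition separately, since each reduces quickly to results already established in the paper. For the equivalence ``the connectivity graph model can grow $S$ iff $S$ is a tree,'' I would begin with the forward direction, which is essentially free: Proposition~\ref{prop:models-diff}(2) states that under the connectivity model, processes starting from a single node can only produce trees. Hence if some process in this model grows $S$ from a single node, $S$ must be a tree.

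For the reverse direction, I would invoke Theorem~\ref{theo:shape-spanning-tree}: given any tree $T$, BFS on line segments rooted at some node $u_0$ produces a collision-free growth process that grows $T$ from a single node. Although that theorem is phrased together with an $O(k\log n)$ time bound, nothing in its proof requires $k$ to be small for feasibility; the construction is defined for every rooted tree, so every tree is growable in the connectivity model. This closes the ``iff.''

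For the second statement, the key observation is that the first part pins down what $S \in \mathcal{C}_k$ must look like in this model: since $S$ has to be a tree, the spanning tree $T$ guaranteed by membership in $\mathcal{C}_k$ is necessarily $S$ itself. Therefore $S$ is a tree with at most $k$ turning points on every root-to-leaf path, and applying Theorem~\ref{theo:shape-spanning-tree} directly to $T = S$ yields a growth process of length $O(k \log n)$.

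The main (small) obstacle is conceptual rather than technical: $\mathcal{C}_k$ is defined for general shapes via the existence of a spanning tree, and one has to justify that in the connectivity model this additional generality collapses, so that the spanning tree hypothesis can be transferred to $S$ itself without loss. Once that point is made, both parts follow by citation of Proposition~\ref{prop:models-diff} and Theorem~\ref{theo:shape-spanning-tree}.
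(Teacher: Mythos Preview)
Your proposal is correct. The paper states Proposition~\ref{prop:cgm} without proof, treating it as an immediate consequence of the preceding results; your argument makes this explicit by citing exactly the right ingredients, Proposition~\ref{prop:models-diff}(2) for the forward direction, Theorem~\ref{theo:shape-spanning-tree} for the reverse direction and the time bound, and the observation that a tree is its own unique spanning tree so that the $\mathcal{C}_k$ hypothesis transfers to $S$ itself.
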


\begin{proposition}\label{prop:agm}
    The adjacency graph model can grow $S$ iff $AC(S)=S$. Additionally, if $S \in \mathcal{C}_k $ which means there exists a spanning tree $T$ of $S$ that has the $k$ turns property, then, the adjacency graph model can grow it in $O(k \log n)$ time steps.
\end{proposition}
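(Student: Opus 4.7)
The plan is to split the proposition into its iff statement and its time bound, and to reduce each to a previously established result. I do not expect any genuinely hard step: both directions of the iff follow essentially from the model definition together with prior constructions.

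For the forward direction of the iff, I would note that by Definition~\ref{def:models}, the adjacency graph model satisfies $S_{t+1}^b = AC(S_t^e)$ for every $t \geq 1$. Thus the shape at the start of any time step after the first is already closed under adjacency, and in particular the final shape $S$ reached at time $t_f$ must satisfy $S = AC(S)$. This direction is essentially immediate from the definition.

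For the reverse direction, I would assume $AC(S) = S$ and invoke the universal elimination-based algorithm of Theorem~\ref{the:cbm-universal}, which produces, for every connected shape, a growth process in the adjacency graph model (cycle-breaking with neighbor handover). Applying it to $S$ yields a process that, at its final time step, produces a shape with vertex set $V(S)$; taking the adjacency closure as prescribed by the model gives $AC(S) = S$. So this direction reduces to citing Theorem~\ref{the:cbm-universal}.

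For the time bound, I would let $T$ be a spanning tree of $S$ witnessing $S \in \mathcal{C}_k$, so every root-to-leaf path of $T$ has $O(k)$ turns. By Theorem~\ref{theo:tree-with-k-turns}, the modified BFS on the line segments of $T$ grows $AC(T)$ in $O(k \log n)$ time steps in the adjacency graph model. Since $T$ and $S$ share the same vertex set and the adjacency closure depends only on vertex positions, $AC(T) = AC(S) = S$, giving the claimed $O(k \log n)$ bound. The only minor subtlety I anticipate is verifying that the spanning-tree-based algorithm indeed produces $AC(T)$ rather than merely $T$, but that is precisely what Theorem~\ref{theo:tree-with-k-turns} asserts, so no additional work is required.
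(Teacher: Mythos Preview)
The paper states Proposition~\ref{prop:agm} without proof, treating it as a summary of what Definition~\ref{def:models} and Theorem~\ref{theo:tree-with-k-turns} already establish. Your proposal is correct and makes explicit exactly the reductions the paper leaves implicit; the only optional refinement is that the reverse direction of the iff can be obtained from Theorem~\ref{theo:tree-with-k-turns} alone (every connected $S$ lies in some $\mathcal{C}_k$, so $AC(S)=S$ is growable via the modified BFS), which keeps the whole argument within cycle-preserving growth and avoids appealing to the stronger cycle-breaking-with-handover model of Theorem~\ref{the:cbm-universal}.
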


The following corollary holds for staircase structures, which can be grown in both models:

\begin{corollary}\label{coro:bounded-staircase}
    Any staircase shape $S$ with a bounded number of steps  
    can be grown in $O(\log n)$ time steps.
\end{corollary}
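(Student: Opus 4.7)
The plan is to reduce the corollary to the upper-bound results already established, namely Theorem~\ref{theo:shape-spanning-tree} (for the connectivity graph model) and Theorem~\ref{theo:tree-with-k-turns} or equivalently Proposition~\ref{prop:agm} (for the adjacency graph model). A staircase with a bounded number of steps is, by Definition~\ref{def:staircase}, a path shape whose sequence of turning points alternates between two perpendicular cardinal directions, and whose total number of turning points is therefore $O(1)$. In particular, treating the staircase as its own (trivial) rooted spanning tree, the unique root-to-leaf path contains $k=O(1)$ turning points, so $S\in\mathcal{C}_k$ for a constant $k$.

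For the connectivity graph model, I would invoke Theorem~\ref{theo:shape-spanning-tree} directly on $S$: BFS on line segments grows $S$ in $O(k\log n)=O(\log n)$ time steps. For the adjacency graph model, I first need to verify that the staircase already coincides with its adjacency closure, i.e.\ $AC(S)=S$; then Theorem~\ref{theo:tree-with-k-turns} applied with $k=O(1)$ yields the same $O(\log n)$ bound.

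The only non-bookkeeping step is the claim $AC(S)=S$. I would argue it by a short geometric observation: consecutive line segments of a staircase are perpendicular and each has length at least one, so any two path-nodes that are not adjacent along the path are separated either by (i) being on non-adjacent parallel segments, whose orthogonal distance is at least two, or (ii) being on the two endpoints of a turn, which are on the same segment and therefore already consecutive. Thus no pair of grid-adjacent nodes of $S$ is missing from $E(S)$, giving $AC(S)=S$.

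The main obstacle, if any, is the subtle point in (ii): one must rule out that a staircase can ``fold back'' so that a node of one segment becomes grid-adjacent to an interior node of a segment two steps later. This is where the definition of staircase really matters: the alternation is strictly between two fixed consecutive cardinal directions (never reversing), so the staircase is strictly monotone in both coordinate axes and cannot revisit the neighborhood of an earlier segment. Once this monotonicity is observed, both parts of the corollary follow by direct application of the cited results, completing the proof.
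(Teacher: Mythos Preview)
Your proposal is correct and matches the paper's intended derivation: the corollary is stated without proof, immediately after Propositions~\ref{prop:cgm} and~\ref{prop:agm}, as their direct consequence for a path with $k=O(1)$ turning points. Your monotonicity argument for $AC(S)=S$ correctly fills in the one detail the paper leaves implicit, and the rest is the straightforward instantiation of Theorems~\ref{theo:shape-spanning-tree} and~\ref{theo:tree-with-k-turns} with constant~$k$.
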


\section{Conclusion}\label{sec:conclusion} 
We explored the exponential growth of geometric structures for different combinations of growth operations and graph models. A combination that we did not study is cycle-preserving growth with neighbor handover. It would be interesting to know what is the class of shapes that can be grown in this model and if the additional property can be used to improve efficiency. Distributed versions of the growth processes studied in this paper is another direction to be investigated. Studying optimality and growth processes whose initial shape is not necessarily a single node are other rich directions of further research.

\bibliography{bibliography}

\begin{thebibliography}{10}

\bibitem{akitaya2021universal}
Hugo~A. Akitaya, Esther~M. Arkin, Mirela Damian, Erik~D. Demaine, Vida Dujmovi{\'c}, Robin Flatland, Matias Korman, Belen Palop, Irene Parada, Andr{\'e}~van Renssen, et~al.
\newblock Universal reconfiguration of facet-connected modular robots by pivots: the {O}(1) musketeers.
\newblock {\em Algorithmica}, 83(5):1316--1351, 2021.

\bibitem{almalki2024geometric}
Nada Almalki and Othon Michail.
\newblock On geometric shape construction via growth operations.
\newblock {\em Theoretical Computer Science}, 984:114324, 2024.

\bibitem{almethen2020pushing}
Abdullah Almethen, Othon Michail, and Igor Potapov.
\newblock Pushing lines helps: Efficient universal centralised transformations for programmable matter.
\newblock {\em Theoretical Computer Science}, 830-831:43--59, 2020.

\bibitem{aloupis2008reconfiguration}
Greg Aloupis, S{\'e}bastien Collette, Erik~D. Demaine, Stefan Langerman, Vera Sacrist{\'a}n, and Stefanie Wuhrer.
\newblock Reconfiguration of cube-style modular robots using {O}(log n) parallel moves.
\newblock In {\em International Symposium on Algorithms and Computation (ISAAC)}, pages 342--353, 2008.

\bibitem{daymude2023canonical}
Joshua~J. Daymude, Andr{\'e}a~W. Richa, and Christian Scheideler.
\newblock The canonical amoebot model: Algorithms and concurrency control.
\newblock {\em Distributed Computing}, 36:1--34, 2023.

\bibitem{derakhshandeh2014amoebot}
Zahra Derakhshandeh, Shlomi Dolev, Robert Gmyr, Andr{\'e}a~W. Richa, Christian Scheideler, and Thim Strothmann.
\newblock Amoebot - a new model for programmable matter.
\newblock In {\em Proceedings of the 26th ACM Symposium on Parallelism in Algorithms and Architectures (SPAA)}, pages 220--222, 2014.

\bibitem{derakhshandeh2016universal}
Zahra Derakhshandeh, Robert Gmyr, Andr{\'e}a~W. Richa, Christian Scheideler, and Thim Strothmann.
\newblock Universal shape formation for programmable matter.
\newblock In {\em Proceedings of the 28th ACM Symposium on Parallelism in Algorithms and Architectures (SPAA)}, pages 289--299, 2016.

\bibitem{doty2012theory}
David Doty.
\newblock Theory of algorithmic self-assembly.
\newblock {\em Communications of the ACM}, 55(12):78--88, 2012.

\bibitem{feldmann2022coordinating}
Michael Feldmann, Andreas Padalkin, Christian Scheideler, and Shlomi Dolev.
\newblock Coordinating amoebots via reconfigurable circuits.
\newblock {\em Journal of Computational Biology}, 29(4):317--343, 2022.

\bibitem{gupta2023collision}
Siddharth Gupta, Marc van Kreveld, Othon Michail, and Andreas Padalkin.
\newblock Collision detection for modular robots -- it is easy to cause collisions and hard to avoid them, 2023.
\newblock \href {https://arxiv.org/abs/2305.01015} {\path{arXiv:2305.01015}}.

\bibitem{luecke1999mathematical}
Richard~H. Luecke, Walter~D. Wosilait, and John~F. Young.
\newblock Mathematical modeling of human embryonic and fetal growth rates.
\newblock {\em Growth, development, and aging: GDA}, 63(1-2):49--59, 1999.

\bibitem{mertzios2022complexity}
George~B. Mertzios, Othon Michail, George Skretas, Paul~G. Spirakis, and Michail Theofilatos.
\newblock The complexity of growing a graph.
\newblock In {\em International Symposium on Algorithms and Experiments for Wireless Sensor Networks (ALGOSENSORS)}, pages 123--137, 2022.

\bibitem{michail2019transformation}
Othon Michail, George Skretas, and Paul~G. Spirakis.
\newblock On the transformation capability of feasible mechanisms for programmable matter.
\newblock {\em Journal of Computer and System Sciences}, 102:18--39, 2019.

\bibitem{padalkin2023shape}
Andreas Padalkin, Manish Kumar, and Christian Scheideler.
\newblock Shape formation and locomotion with joint movements in the amoebot model, 2023.
\newblock \href {https://arxiv.org/abs/2305.06146} {\path{arXiv:2305.06146}}.

\bibitem{patitz2014introduction}
Matthew~J. Patitz.
\newblock An introduction to tile-based self-assembly and a survey of recent results.
\newblock {\em Natural Computing}, 13:195--224, 2014.

\bibitem{winfree1996computational}
Erik Winfree.
\newblock On the computational power of {DNA} annealing and ligation.
\newblock {\em DNA based computers}, pages 119--221, 1996.

\bibitem{winfree1998algorithmic}
Erik Winfree.
\newblock {\em Algorithmic self-assembly of DNA}.
\newblock California Institute of Technology, 1998.

\bibitem{woods2013active}
Damien Woods, Ho-Lin Chen, Scott Goodfriend, Nadine Dabby, Erik Winfree, and Peng Yin.
\newblock Active self-assembly of algorithmic shapes and patterns in polylogarithmic time.
\newblock In {\em Proceedings of the 4th conference on Innovations in Theoretical Computer Science (ITCS)}, pages 353--354, 2013.

\end{thebibliography}

\end{document}